\def \H{{\cal H}}
\newcommand{\R}{\mathbb{R}}
\newcommand{\Z}{\mathbb{Z}}
\newcommand{\ket}[1]{| #1 \rangle}
\newcommand{\ip}[2]{\langle #1|#2 \rangle}
\newcommand{\proj}[1]{| #1 \rangle \langle #1 |}
\DeclareMathOperator{\polylog}{polylog}
\newcommand{\be}{\begin{equation}}
\newcommand{\ee}{\end{equation}}
\newcommand{\bea}{\begin{eqnarray}}
\newcommand{\eea}{\end{eqnarray}}
\newcommand{\bes}{\begin{equation*}}
\newcommand{\ees}{\end{equation*}}
\newcommand{\beas}{\begin{eqnarray*}}
\newcommand{\eeas}{\end{eqnarray*}}
\newtheorem{thm}{Theorem}
\newtheorem*{thm*}{Theorem}
\newtheorem{lem}[thm]{Lemma}
\newtheorem*{lem*}{Lemma}
\begin{document}

\title{Quantum algorithms for search with wildcards and combinatorial group testing}

\author{Andris Ambainis\footnote{University of Latvia, Riga.}\ \ and Ashley Montanaro\footnote{Centre for Quantum Information and Foundations, Department of Applied Mathematics and Theoretical Physics, University of Cambridge, UK; {\tt am994@cam.ac.uk}.}}

\maketitle

\begin{abstract}
We consider two combinatorial problems. The first we call ``search with wildcards'': given an unknown $n$-bit string $x$, and the ability to check whether any subset of the bits of $x$ is equal to a provided query string, the goal is to output $x$. We give a nearly optimal $O(\sqrt{n} \log n)$ quantum query algorithm for search with wildcards, beating the classical lower bound of $\Omega(n)$ queries. Rather than using amplitude amplification or a quantum walk, our algorithm is ultimately based on the solution to a state discrimination problem. The second problem we consider is combinatorial group testing, which is the task of identifying a subset of at most $k$ special items out of a set of $n$ items, given the ability to make queries of the form ``does the set $S$ contain any special items?''\ for any subset $S$ of the $n$ items. We give a simple quantum algorithm which uses $O(k\log k)$ queries to solve this problem, as compared with the classical lower bound of $\Omega(k \log(n/k))$ queries.
\end{abstract}


\section{Introduction}

We present new quantum algorithms for two combinatorial problems. 
The first problem is {\em search with wildcards}.
In this problem, we are given an $n$-bit string $x$ and our task is to determine $x$ (so that with probability $1-\epsilon$, all bits of $x$ are correct) 
using the minimum number of queries in the following {\em wildcard query} model. 
In one wildcard query, we can check correctness of any subset of the bits of $x$.
That is, we identify queries with pairs $(S,y)$, where $S \subseteq [n]$ and $y \in \{0,1\}^{|S|}$ and the query returns 1 if $x_S = y$ 
(here the notation $x_S$ means the subset of the bits of $x$ specified by $S$). 

Wildcard queries are a generalisation of the standard quantum query model; the standard model corresponds to queries in which $S$ contains just one element. Classically, each query in this more general model still provides only one bit of information. Hence, by an information-theoretic argument classical computers still require $\Omega(n)$ queries to solve search with wildcards. Moreover, in the standard quantum query model, identifying $x$ with bounded error would require $\Omega(n)$ queries~\cite{farhi98a,beals01}. Surprisingly, in contrast to these two lower bounds, we have the following theorem.

\begin{thm}
\label{thm:sww}
There is a quantum algorithm which solves the search with wildcards problem using $O(\sqrt{n}\log n)$ queries on average\footnote{We say that an algorithm ``uses $q$ queries on average'' if the expected number of queries it makes on the worst-case input is $q$. We stress that no distribution on the inputs is assumed.}. Further, any bounded-error quantum algorithm which solves this problem must make $\Omega(\sqrt{n})$ queries.
\end{thm}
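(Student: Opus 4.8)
The plan is to turn the task of learning $x$ into a sequence of quantum state-discrimination problems, exploiting that a wildcard query may target an arbitrary subset. First I would isolate the following primitive: given oracle access to $x$ and a single application of the (phase) oracle to a carefully chosen superposition $\sum_{S,y}\alpha_{S,y}\ket{S,y}$ over subsets $S\subseteq[n]$ and guesses $y\in\{0,1\}^{|S|}$, one obtains a state $\ket{\psi_x}$ whose dependence on $x$ is concentrated in a chosen ``window'' of coordinates. As $x$ ranges over the possibilities consistent with what is already known, the states $\{\ket{\psi_x}\}$ form a family to be discriminated; the subset structure is what lets a single query imprint information about many coordinates at once, in contrast to the standard model where learning all of $x$ costs $\Theta(n)$ queries. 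I would then run the algorithm in $O(\log n)$ rounds, each round using $O(\sqrt n)$ queries to determine a constant fraction of the still-unknown coordinates with high probability, so that the unknown part shrinks geometrically and the total cost is $O(\sqrt n\log n)$.

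\textbf{Solving the discrimination.} Within a round I would prepare several copies of $\ket{\psi_x}$ and apply a measurement tailored to the family $\{\ket{\psi_x}\}$, e.g.\ the pretty good measurement, whose success probability is controlled by the spectrum of the Gram matrix $G_{x,x'}=\ip{\psi_x}{\psi_{x'}}$. The $\sqrt n$ factor would enter through amplitude amplification: rather than hoping a single measurement succeeds, I would amplify the ``correct-identification'' branch, so that constant success probability per round costs $O(\sqrt n)$ queries. Making the overlaps $\ip{\psi_x}{\psi_{x'}}$ small enough (so that $G$ is well conditioned and the measurement is nearly projective) is exactly where the choice of the amplitudes $\alpha_{S,y}$ and of the window sizes must be balanced, and this Gram-matrix estimate is the step I expect to be the main obstacle: one must show that the subset structure yields a genuine quadratic improvement over the $\Theta(n)$-query state-discrimination underlying standard-model learning, while keeping the error budget over all $O(\log n)$ rounds below $\epsilon$.

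\textbf{Lower bound.} For the $\Omega(\sqrt n)$ bound I would first note a trap to avoid: one cannot simply embed unstructured search, because for inputs of Hamming weight at most one a wildcard query $(S,0)$ reveals whether the unique $1$ lies in $S$, and hence classical binary search already locates it in $O(\log n)$ queries. The hardness must therefore come from inputs of high entropy, and I would invoke the (weighted) adversary method on the full input set $\{0,1\}^n$. Reformulate the oracle as an ordinary Boolean oracle over the consistency variables $X_{(S,y)}=[x_S=y]$, take an adversary matrix $\Gamma$ with weights depending only on the Hamming distance between inputs, and lower-bound the query complexity by $\|\Gamma\|/\max_{(S,y)}\|\Gamma\circ\Delta_{(S,y)}\|$, where $\Delta_{(S,y)}$ marks the pairs of inputs on which the answer to query $(S,y)$ differs. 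The crux, and the step I expect to be delicate, is choosing the distance-weights so that this ratio is $\Omega(\sqrt n)$: one must simultaneously control the contribution of singleton queries (which distinguish many neighbouring inputs but each only weakly) and of large queries (which are sharply targeted but fire on an exponentially small slab), showing that no single aimed query can carry more than $O(\sqrt n)$ worth of distinguishing power in operator norm.
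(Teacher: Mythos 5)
There is a genuine gap in both halves of your proposal, and in both cases the gap is exactly the step you set aside as ``the main obstacle'' or ``delicate'' --- that step \emph{is} the proof.

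\textbf{Upper bound.} The paper's argument rests on a concrete state-discrimination lemma for which your plan has no substitute. The states are $\ket{\psi^k_x} \propto \sum_{S \subseteq [n], |S|=k}\ket{S}\ket{x_S}$, where the bits $x_S$ are written in a \emph{register}, not in phases; their Gram matrix $G_{xy} = \binom{n-d(x,y)}{k}/\binom{n}{k}$ depends only on $x \oplus y$, so it is diagonalised by the Fourier transform over $\Z_2^n$, its eigenvalues follow from a Delsarte identity, and a Plancherel computation shows that for $k = n - O(\sqrt{n})$ the pretty good measurement outputs a string at expected Hamming distance $O(1)$ from $x$. Your primitive is different and weaker: one phase-oracle call on a fixed superposition $\sum_{S,y}\alpha_{S,y}\ket{S,y}$ yields overlaps $\ip{\psi_x}{\psi_{x'}} = 1 - 2\Pr_{(S,y)\sim |\alpha|^2}[Q_x(S,y) \neq Q_{x'}(S,y)]$, and the answers can differ only on branches with $y \in \{x_S, x'_S\}$; for large $S$ such branches carry weight only if the amplitudes already sit on correct guesses for the window --- i.e.\ only if you already know what you are trying to learn, while for small $S$ you are back in the standard-query regime where $\Omega(n)$ queries are needed. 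The paper resolves this chicken-and-egg problem by bootstrapping: the already-learned bits ride along in the register, and the PGM unitary is applied coherently to extend knowledge from $k$ to $k + \Theta(\sqrt{k})$ bits, giving $O(\sqrt{n})$ stages of expected $O(\log n)$ queries each --- the reverse of your $O(\log n)$ rounds of $O(\sqrt{n})$ queries. Your accounting also has two unrepaired holes: you never cost the preparation of ``several copies'' of $\ket{\psi_x}$ (preparing a register-state over a window of $m$ unknown bits costs $\Theta(m)$ queries from scratch, which is why the paper does this only once, at window size $O(\sqrt{n})$), and amplitude amplification --- which the paper pointedly does not use --- is invoked without specifying what unitary, what success amplitude, or what verifier is being amplified. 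Finally, no ``error budget over rounds'' is needed in the paper: each PGM guess is verified by one wildcard query and repaired by coherent binary search ($O(\log n)$ queries per error, $O(1)$ expected errors), so mistakes cost extra expected queries rather than correctness, which is what yields the ``on average'' form of the theorem.

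\textbf{Lower bound.} You name the right method and the right kind of weights but stop at the point where the proof happens, and with the simplest choice it is short rather than delicate. Put unit weight on pairs at Hamming distance $1$ and zero elsewhere (this is the paper's scheme, used inside Zhang's strong weighted adversary). Then $\text{wt}(x) = n$ for every $x$. For a query $q = (S,t)$, the distance-$1$ pairs whose answers differ form disjoint stars: each $x$ with $x_S = t$ is joined to its $|S|$ neighbours obtained by flipping one bit inside $S$, and each such neighbour has a unique matching partner; hence $v(x,q) = |S|$, $v(y,q) = 1$, and the adversary ratio is at least $\sqrt{n \cdot n/(|S| \cdot 1)} \ge \sqrt{n}$. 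Equivalently, in your spectral formulation with $\Gamma$ the hypercube adjacency matrix, $\|\Gamma\| = n$ while $\|\Gamma \circ \Delta_{(S,t)}\| = \sqrt{|S|} \le \sqrt{n}$, since a disjoint union of stars $K_{1,|S|}$ has spectral norm $\sqrt{|S|}$. No balancing between singleton and large queries is required; the distance-$1$ weights alone already handle both.
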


Rather than using the usual methods of designing quantum algorithms (such as amplitude amplification or quantum walks), our algorithm is based on a novel information-theoretic idea. Our algorithm gradually increases the information about the input string $x$ by repeatedly using the Pretty Good Measurement (PGM) ~\cite{belavkin75,hausladen94} to distinguish a set of quantum states. With one query, we can increase the knowledge about the input $x$ from $k$ bits to $k+\Theta(\sqrt{k})$ bits -- which
leads to a quantum algorithm using $O(\sqrt{n}\log n)$ queries. 
We think that this idea (and the natural state distinguishability problem that we solve, in Lemma \ref{lem:distinguish}),
may be of independent interest and may find more applications.

The second problem is the well known {\em combinatorial group testing} (CGT). In this problem, we are given oracle access to an $n$-bit string $x$ such that the Hamming weight of $x$ is at most $k$. We usually assume that $k$ is much smaller than $n$. In one query, we can get the OR of an arbitrary subset of the bits of $x$. The goal is to determine $x$ using the minimal expected number of queries. This models a scenario where we would like to identify a small subset of special items out of a large set of items, given the ability to make queries of the form ``does the set $S$ contain any special items?'' for any subset $S$ of the items.

The idea of combinatorial group testing\footnote{CGT is sometimes simply known as ``group testing''; we prefer the inclusion of ``combinatorial'' to avoid confusion with the notion of testing a set for being a group.} dates back to 1943, when it was proposed as a means of identifying and rejecting syphilitic men called up for induction into the US military~\cite{dorfman43}. Following this seminal work, a vast literature on the subject has developed; see the textbook~\cite{du00} for a detailed review, or the paper~\cite{porat08} for a discussion of more recent work. Areas to which efficient algorithms for CGT have been applied include molecular biology~\cite{farach97}, data streaming algorithms~\cite{cormode05}, compressed sensing~\cite{cormode06}, and pattern matching in strings~\cite{clifford10}.

Classically, it is known that the number of queries required to solve CGT is $\Theta(k \log(n/k))$~\cite{du00}. The lower bound is an information-theoretic argument while the upper bound is based on binary search. 
In the quantum case, we have the following result\footnote{A previous version of this paper claimed an upper bound of $O(\sqrt{k} \polylog(k))$ queries, via a reduction to search with wildcards. However, the reduction was incorrect and the precise quantum query complexity of CGT remains open.}.

\begin{thm}
\label{thm:main}
There is a quantum algorithm which solves the combinatorial group testing problem using $O(k \log k)$ queries on average. Further, any quantum algorithm which solves CGT with bounded error must make $\Omega(\sqrt{k})$ queries.
\end{thm}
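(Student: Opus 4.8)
The plan is to prove the two halves of Theorem~\ref{thm:main} separately, and to treat the removal of all dependence on $n$ in the upper bound as the principal difficulty. It is convenient to first fix the benchmark to beat. Classically one localises the at most $k$ ones of $x$ by recursive splitting (binary search on the OR oracle), and since each of the $\le k$ ones costs about $\log(n/k)$ queries to pin down, the classical cost is $\Theta(k\log(n/k))$; the enemy is the factor $\log n$. Because each classical query returns a single deterministic bit, no classical algorithm can avoid paying roughly $\log\binom{n}{\le k}\approx k\log(n/k)$ queries in total. The point of the quantum algorithm must therefore be that a suitably prepared quantum OR-query can be made to reveal $\omega(1)$ bits of information about the support of $x$ at once --- exactly the phenomenon exploited in Theorem~\ref{thm:sww}, where one query increases one's knowledge of the string from $j$ to $j+\Theta(\sqrt{j})$ bits via the Pretty Good Measurement.

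For the upper bound I would therefore try to route the identification of the support through the state-distinguishing primitive of Lemma~\ref{lem:distinguish}, rather than through amplitude amplification. (As a sanity check on the difficulty, the obvious alternative --- repeatedly using amplitude amplification with the OR oracle as its checking oracle to extract the ones one at a time --- costs $O(\sqrt{n/r})$ to find the $r$-th remaining one and hence $O(\sqrt{nk})$ in total, which has precisely the wrong, $n$-dependent, form.) The natural route is: (i) maintain a description of the set of weight-$\le k$ strings still consistent with the answers seen so far; (ii) encode the competing hypotheses about the unknown support as a family of quantum states and apply the PGM to distinguish them, as in Lemma~\ref{lem:distinguish}, so that each query contributes $\omega(1)$ bits about the positions of the ones; and (iii) iterate, using a doubling schedule to cope with the fact that the true weight $t=|x|\le k$ is not known in advance, until all ones have been located with probability $1-\epsilon$. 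If each query can be made to yield on the order of $\log(n/k)/\log k$ bits about the support on average, the totals match: $k\log(n/k)$ bits of required information divided by the per-query gain gives $O(k\log k)$ queries, and this is what one must establish.

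The hard part, and where I expect the real work to lie, is step~(ii): the OR oracle is strictly weaker than the wildcard oracle, since an OR query on $S$ only tests whether $x_S=0^{|S|}$, i.e.\ only the all-zeros pattern, whereas a wildcard query may test any pattern. This is precisely the gap that invalidated the earlier claimed reduction to search with wildcards (noted in the footnote to Theorem~\ref{thm:main}), so one cannot simply invoke the algorithm behind Theorem~\ref{thm:sww} as a black box. The way I would try to recover enough information per query is to exploit the sparsity promise: with at most $k$ ones, an OR query on a random subset of an appropriately chosen density acts as an informative statistical test of the support, and one must show that the resulting ensemble of states is distinguishable well enough by the PGM to certify the claimed per-query information gain. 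Controlling the accumulation of error across the $\le k$ localised ones, so that the overall procedure still succeeds with probability $1-\epsilon$ while making only $O(k\log k)$ queries \emph{on average}, is the remaining bookkeeping.

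For the lower bound I would argue $\Omega(\sqrt{k})$ by reduction from a problem with a known quantum lower bound of this order, taking care that the reduction stays inside the OR-query model (so that OR queries to the CGT instance simulate queries to the hard instance, not the reverse). The cleanest candidate is a symmetric distinguishing/counting task: restrict $x$ to a known block of $\Theta(k)$ positions and show that recovering $x$ exactly forces the algorithm to resolve the Hamming weight, or the location of a planted needle, to a precision for which the polynomial or adversary method yields an $\Omega(\sqrt{k})$ bound even when the algorithm is allowed arbitrary OR-subset queries. The delicate point is verifying that granting the full OR oracle does not trivialise the embedded instance; once that is arranged, the standard search/counting lower bound transfers and completes Theorem~\ref{thm:main}. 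The residual gap between the $O(k\log k)$ upper bound and the $\Omega(\sqrt{k})$ lower bound is, as noted, left open.
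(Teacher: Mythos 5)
Both halves of your proposal contain a genuine gap: in each case the step you defer is the entire proof. For the upper bound, you propose to make OR queries yield $\omega(1)$ bits of support information via the PGM of Lemma~\ref{lem:distinguish} --- essentially to repair the broken reduction to search with wildcards --- but you never exhibit the ensemble of states, the measurement, or the claimed per-query information gain; you explicitly leave ``the real work'' of step~(ii) open. That deferred step is exactly what the footnote to Theorem~\ref{thm:main} says could not be made to work, and the paper does not do it either. Instead, the paper's $O(k\log k)$ algorithm (Section~\ref{sec:prelim}) avoids the PGM entirely and is elementary: if $|x_S|\le 1$ then an OR query on $S$ computes the parity $t\cdot x_S$, so a single Bernstein--Vazirani/van Dam step on $S$ returns the unique 1-index of $x$ in $S$ with certainty; moreover the routine never produces false positives, since the amplitude on outcomes $y$ with $y_i=1$ vanishes whenever $x_{S_i}=0$. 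Choosing $S$ by including each not-yet-identified index with probability $1/k$ isolates exactly one 1-index with probability at least $(1-1/k)^{k-1}\ge 1/e$, so each 1-index costs $O(1)$ expected queries when $|x|=k$ is known, and cycling through guesses $k'=2^i$, $i\le\lceil\log_2 k\rceil$, when only $|x|\le k$ is known costs $O(\log k)$ expected queries per 1-index, giving $O(k\log k)$ in total; termination is certified by querying the complement of the indices found. So your ``sanity check'' dismissing direct query primitives in favour of an information-theoretic PGM argument points in the wrong direction: the winning algorithm is precisely a direct primitive on random sparse subsets.

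For the lower bound, your plan of embedding a known-hard symmetric problem fails on its candidates. The ``planted needle'' instance is solvable with \emph{one} OR query (this is the paper's $k=1$ lemma), so no $\Omega(\sqrt{k})$ bound can be extracted from it; and for weight counting there is no known bound \emph{in the OR-subset-query model} to transfer --- standard-model lower bounds do not carry over, because a single OR query cannot be simulated by $O(1)$ standard queries. The ``delicate point'' you flag is thus not a verification step but the unresolved heart of the matter. The paper's route is different: it proves $\Omega(\sqrt{n})$ for search with wildcards directly (Lemma~\ref{lem:wlower}), using Zhang's strong weighted adversary bound (Theorem~\ref{thm:swadv}) with unit weights on pairs at Hamming distance 1, for which $\text{wt}(x)=n$ while $v(x,q)v(y,q)\le|S|\le n$, yielding the ratio $\sqrt{n}$; it then invokes the embedding (given in the preliminaries) of wildcard search on $k$ bits into a CGT instance of sparsity $k$ --- pairs of positions encoding each bit, exactly one 1 per pair --- which is the direction of reduction that is actually valid, to conclude that CGT requires $\Omega(\sqrt{k})$ queries.
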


Note that our Theorem has no dependence on $n$ (unlike the classical complexity). We prove Theorem \ref{thm:main} in two parts: a $O(k \log k)$-query quantum algorithm in Section \ref{sec:prelim} below, and a $\Omega(\sqrt{k})$ quantum lower bound in Section \ref{sec:lower}. Each part of the result is fairly straightforward.


\subsection{Related work}

One can view the search with wildcards problem as oracle interrogation -- i.e.\ learning the contents of an unknown bit-string $x$ hidden in an oracle -- in a non-standard oracle model. There has recently been some interest in this problem, in various different oracle models; we summarise the results which have been obtained as follows.

\begin{itemize}
\item First, it was shown by van Dam~\cite{vandam98} that in the standard oracle model (where the oracle performs the map $i \mapsto x_i$), there exists a quantum algorithm which learns $x$ with constant success probability using $n/2 + O(\sqrt{n})$ queries, contrasting with the $n$ classical queries required to learn $x$. Farhi et al~\cite{farhi99} later showed a matching $n/2 + \Omega(\sqrt{n})$ lower bound.

\item Iwama et al have studied the quantum query complexity of counterfeit coin problems~\cite{iwama10}. Here we are given a set of $n$ coins, $k$ of which are false (underweight), and the task is to determine the false coins. In this model, a query is specified by $q \in \{0,1,-1\}^n$ such that $\sum_i q_i = 0$. Then the oracle returns 0 if $q \cdot x = 0$, and 1 otherwise. We imagine that $x$ is a set of coins, and $x_i=0$ if the $i$'th coin is fair, and $x_i=1$ if the $i$'th coin is false. The oracle simulates a ``quantum scale'', and $q_i = 1$ (resp.\ $q_i = -1$) means that we place the $i$'th coin on the left (resp.\ right) pan. If the oracle returns 0, the scale is balanced, and if it returns 1, the scale is unbalanced. Iwama et al showed that there is a quantum algorithm based on amplitude amplification which solves this problem using only $O(k^{1/4})$ queries, beating the classical information-theoretic lower bound of $\Omega(k \log(n/k))$ queries. Note that, similarly to our algorithm for CGT, their result removes any dependence on $n$ from the complexity.

\item Finally, recently Cleve et al have studied oracle interrogation in the model of substring queries~\cite{cleve12}. Here the allowed queries are of the form ``is $y$ a substring of $x$?'' for $y \in \{0,1\}^k$, $1 \le k \le n$, where a substring of $x$ is a consecutive subsequence of $x$. Classically, this problem again requires $n$ queries; Cleve et al proved that quantum algorithms can achieve a linear speedup, giving an algorithm which uses $3n/4 + o(n)$ queries. They also show an $\Omega(n/\log^2 n)$ quantum lower bound.
\end{itemize}


\subsection{Preliminaries and notation}

We write $[n] := \{1,2,\ldots,n\}$, and use $|x|$ for the Hamming weight of $x$ and $d(x,y)$ for the Hamming distance between $x$ and $y$. For $x \in \{0,1\}^n$, a 1-index (resp.\ 0-index) of $x$ is an index $i \in [n]$ such that $x_i=1$ (resp.\ $x_i=0$). For readability, we sometimes leave states unnormalised. The two problems that we consider are precisely defined as follows:

\begin{itemize}
\item SEARCH WITH WILDCARDS.
We are given oracle access to an $n$-bit string $x$ (with no restriction on Hamming weight) and our task is to determine $x$ using the minimum number of queries. A query is specified by a string $s \in \{0,1,\ast\}^n$, and returns 1 if $x_i = s_i$ for all $i$ such that $s_i \neq \ast$, and returns 0 otherwise. We can equivalently identify queries with pairs $(S,y)$, where $S \subseteq [n]$ and $y \in \{0,1\}^{|S|}$ and the query $Q_x(S,y)$ returns 1 if $x_S = y$ (here the notation $x_S$ means the subset of the bits of $x$ specified by $S$). In the case of quantum algorithms, we give the algorithm access to the unitary oracle which maps $\ket{S}\ket{y}\ket{z} \mapsto \ket{S}\ket{y}\ket{z \oplus Q_x(S,y)}$.

\item COMBINATORIAL GROUP TESTING (CGT).
We are given oracle access to an $n$-bit string $x$ such that the Hamming weight of $x$ is at most $k$. We usually assume that $k$ is much smaller than $n$. We are allowed to query arbitrary subsets $S \subseteq [n]$ of the bits of $x$; a query $Q_x(S)$ returns 1 if there exists $i \in S$ such that $x_i = 1$. In the case of quantum algorithms, we give the algorithm access to the unitary oracle which maps $\ket{S}\ket{z} \mapsto \ket{S}\ket{z \oplus Q_x(S)}$.
\end{itemize}

We note that search with wildcards is a special case of CGT. Consider an instance of CGT where $k\le n/2$ and the input is divided into $k$ blocks $B_i = \{2i-1,2i\}$ of size 2, $1 \le i \le k$, followed by a final block of $n-2k$ bits. The input is promised to contain exactly one 1 in each of the first $k$ blocks; the position of the 1 within each block $B_i$ encodes a bit $z_i$. Now consider a subset $S$ of bits queried by an algorithm for CGT, and let $S_i = S \cap B_i$. We may assume that $S$ is a subset of the first $2k$ bits, as the last $n-2k$ bits are promised to be 0. Now observe that by choosing each $S_i$ appropriately, we can make three different kinds of query: $S_i = \{2i-1\}$ corresponds to ``does $z_i=0$?'', $S_i = \{2i\}$ corresponds to ``does $z_i=1$?'', and $S_i = \{\}$ corresponds to excluding $z_i$ from the query (the remaining query $S_i = \{2i-1,2i\}$ always returns 1 and is hence uninteresting). The overall query $S = \bigcup_i S_i$ is the OR of all of the individual queries. Thus a CGT query corresponds to a subset $S$ of the bits of $z$ and a claimed assignment $y$ to these bits; the response is 1 if any of the bits of $y$ are equal to $z$. To convert this into an instance of search with wildcards on $k$ bits, simply observe that inverting the response to such a query is equivalent to performing a query $(S,y)$ to $\bar{z}$ where the reply is 1 if $\bar{z}_S = y$. Thus an algorithm for CGT can be used to learn $\bar{z}$ and hence $z$.


\section{Search with wildcards}
\label{sec:sww}

We now show that we can indeed solve the search with wildcards problem efficiently, proving the upper bound part of Theorem \ref{thm:sww} (for the lower bound, see Section \ref{sec:lower}).
Consider an instance of search with wildcards of size $n$. Let $x\in\{0, 1\}^n$ and $k\in[n]$.

Our proof uses the following state distinguishability result (which we prove in Section \ref{sec:discrim}).
\begin{lem}
\label{lem:distinguish}
Fix $n \ge 1$ and, for any $0 \le k \le n$, set
\[ \ket{\psi^k_x} := \frac{1}{\binom{n}{k}^{1/2}} \sum_{S \subseteq [n],|S|=k} \ket{S}\ket{x_S}, \]
where $\ket{x_S} := \bigotimes_{i \in S} \ket{x_i}$. Then, for any $k = n - O(\sqrt{n})$, there is a quantum measurement (POVM) which, on input $\ket{\psi^k_x}$, outputs $\widetilde{x}$ such that the expected Hamming distance $d(x,\widetilde{x})$ is $O(1)$.
\end{lem}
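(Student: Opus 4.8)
The plan is to take the promised POVM to be the Pretty Good Measurement (PGM) for the ensemble $\{\ket{\psi^k_x}\}_{x\in\{0,1\}^n}$ with uniform priors, and to bound the resulting expected Hamming distance directly. The first step is to compute the Gram matrix. Since $\langle x_S|y_S\rangle$ is $1$ when $x$ and $y$ agree on all of $S$ and $0$ otherwise, a short count gives $\langle\psi^k_x|\psi^k_y\rangle = \binom{n-d}{k}/\binom{n}{k}$, where $d=d(x,y)$; in particular the inner product depends only on $x\oplus y$. Equivalently, the bit-flip unitaries $\ket{S}\ket{w}\mapsto\ket{S}\ket{w\oplus z_S}$ form a representation of $\Z_2^n$ whose orbit is exactly our state set, so the ensemble is geometrically uniform and the $2^n\times 2^n$ Gram matrix $G$ is a $\Z_2^n$-circulant. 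Hence $G$ is diagonalised by the Walsh--Hadamard transform, and I would record that the PGM output law has the translation-invariant form $p(y\mid x)=\big((\sqrt G)_{xy}\big)^2 = g(x\oplus y)^2$, where $g(w)=2^{-n}\sum_t\sqrt{\Lambda_{|t|}}\,(-1)^{t\cdot w}$ and $\Lambda_{|t|}$ is the eigenvalue of $G$ on the Walsh character indexed by $t$ (taking the pseudoinverse square root, since $G$ is singular).

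Next I would diagonalise $G$ explicitly. Writing $[x_i=y_i]$ as the single-site identity $I$ and the ``unconstrained'' sites as the all-ones matrix $J=2\proj{+}$, one gets $G=\binom{n}{k}^{-1}\sum_{|S|=k}\big(\bigotimes_{i\in S} I\big)\otimes\big(\bigotimes_{i\notin S}J\big)$, which in the $\{\ket{+},\ket{-}\}$ basis contributes eigenvalue $2^{m}[\,\supp(t)\subseteq S\,]$ from each term, where $m:=n-k$. Summing over $S$ yields
\[ \Lambda_j \;=\; \frac{2^{m}\binom{n-j}{m}}{\binom{n}{m}},\qquad j=|t|, \]
nonzero precisely for $j\le k$. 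The goal is then to bound $\E[d(x,\widetilde x)]=\sum_w |w|\,g(w)^2$, which by translation-invariance is the mean weight of the PGM error distribution.

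The crucial simplification is a Fourier/Parseval manipulation of this mean weight. Writing $|w|=\sum_i w_i$ and $w_i=\tfrac12(1-(-1)^{w_i})$, and using $\sum_j\binom{n}{j}\Lambda_j=\tr G=2^n$ together with Pascal's rule, I expect the cross terms to collapse into the clean ``discrete Dirichlet energy''
\[ \E[d(x,\widetilde x)] \;=\; \frac{n}{2^{\,n+1}}\sum_{j}\binom{n-1}{j}\big(\sqrt{\Lambda_j}-\sqrt{\Lambda_{j+1}}\big)^2 \;=\; \frac{n}{4}\,\E_{j\sim \mathrm{Bin}(n-1,1/2)}\Big[\big(\sqrt{\Lambda_j}-\sqrt{\Lambda_{j+1}}\big)^2\Big]. \]
I would then bound the increments using the exact ratio $\Lambda_{j+1}/\Lambda_j=(n-j-m)/(n-j)$, which gives $(\sqrt{\Lambda_j}-\sqrt{\Lambda_{j+1}})^2\le \Lambda_j\,m^2/\big(4(n-j)(n-j-m)\big)$. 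For $j$ in the bulk of the binomial (where $n-j=\Theta(n)$) this is $O(m^2/n^2)\,\Lambda_j$, and since $\E_{j}[\Lambda_j]=O(1)$ (it is within a factor $2$ of the normalisation $2^{-n}\sum_j\binom{n}{j}\Lambda_j=1$), the bulk contributes $O(m^2/n)=O(1)$ once $m=O(\sqrt n)$, matching the hypothesis $k=n-O(\sqrt n)$.

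I expect the main obstacle to be the rigorous treatment of the tail and boundary, rather than the bulk estimate. The increment bound degenerates as $j\to k$ (where $n-j-m\to 0$ and $\Lambda_{k+1}=0$), so the region $j\approx k$ must be handled separately; fortunately $j=n-m$ lies deep in the tail of $\mathrm{Bin}(n-1,1/2)$, so the binomial weight there is $e^{-\Omega(n)}$ and, even against the crude bound $\Lambda_j\le\Lambda_0=2^m=2^{O(\sqrt n)}$, the entire tail contributes $o(1)$. Assembling the $O(1)$ bulk and the $o(1)$ tail gives $\E[d(x,\widetilde x)] = \Theta(m^2/n)=O(1)$, which simultaneously confirms the claimed bound and explains why $k=n-O(\sqrt n)$ is exactly the right threshold.
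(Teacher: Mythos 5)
Your proposal is correct, and it shares the paper's overall skeleton -- the PGM for the uniform ensemble, the Gram matrix $G_{xy} = \binom{n-d(x,y)}{k}/\binom{n}{k}$, diagonalisation by the Fourier transform over $\Z_2^n$, and a Plancherel-type manipulation of $\sum_w |w|\,g(w)^2$ -- but it diverges from the paper at two genuine points. First, the paper obtains the eigenvalues by citing an identity of Delsarte, whereas your tensor decomposition $G = \binom{n}{k}^{-1}\sum_{|S|=k}\bigl(\bigotimes_{i\in S}I\bigr)\otimes\bigl(\bigotimes_{i\notin S}J\bigr)$ derives the same spectrum $\Lambda_j = 2^m\binom{n-j}{m}/\binom{n}{m}$ elementarily, which makes the argument self-contained. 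Second, and more substantively, both proofs reduce to the same cross-term sum $2^{1-n}\sum_j\binom{n-1}{j}\sqrt{\Lambda_j\Lambda_{j+1}}$ (this is the paper's $2^n\hat{g}(e_i)$), but the paper lower-bounds it term by term, writing the summands as $\binom{k}{t}T_t$ and showing $T_t \ge 1-O(1/n)$ via $\sqrt{x}\ge \tfrac32 x - \tfrac12 x^2$ for $t$ near $n/2$, while you complete the square using $\tr G = \sum_j \binom{n}{j}\Lambda_j = 2^n$ and Pascal's rule to get the exact identity
\[ \E[d(x,\widetilde{x})] \;=\; \frac{n}{2^{n+1}}\sum_j \binom{n-1}{j}\bigl(\sqrt{\Lambda_j}-\sqrt{\Lambda_{j+1}}\bigr)^2 \]
and then upper-bound the increments. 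I checked that this identity holds exactly, that the increment bound $\bigl(\sqrt{\Lambda_j}-\sqrt{\Lambda_{j+1}}\bigr)^2 \le \Lambda_j m^2/\bigl(4(n-j)(n-j-m)\bigr)$ follows from $\Lambda_{j+1}/\Lambda_j = (n-j-m)/(n-j)$ and $\Lambda_j \ge \Lambda_{j+1}$, and that the bulk/tail split closes (the tail is killed by a Chernoff bound against the crude estimate $\Lambda_j \le 2^m = 2^{O(\sqrt{n})}$). What your route buys is rigor and transparency: every term in your energy expression is nonnegative, so only upper bounds are ever needed, and the degenerate region $j \approx k$ where the ratio bound blows up is cleanly quarantined in an exponentially light tail -- whereas the paper's term-by-term estimate is stated only ``for constant $a$ and $c$'' and leaves the uniformity over the sum and the tail contribution implicit. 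Your identity also exposes the two-sided behaviour $\E[d(x,\widetilde{x})] = \Theta(m^2/n)$, explaining why $k = n - O(\sqrt{n})$ is exactly the right threshold, something the paper's one-sided bound does not show.
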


In words, Lemma \ref{lem:distinguish} says that, given a superposition over $k$-subsets of the bits of $x$ with $k = n - O(\sqrt{n})$, we can output a bit-string that is likely to be very close to $x$ itself. This is in sharp contrast to the analogous situation classically; given any $n-O(\sqrt{n})$ bits of $x$, determining the remaining $O(\sqrt{n})$ bits succeeds only with exponentially small probability. Roughly speaking, our algorithm for search with wildcards will repeatedly use Lemma \ref{lem:distinguish} to learn $O(\sqrt{n})$ bits of $x$ at a time, fixing the incorrect bits after each measurement.

Consider an instance of search with wildcards of size $n$. Let $x\in\{0, 1\}^n$ and $k\in[n]$.
Recall that we denote
\[ \ket{\psi^k_x} = \sum_{S: S\subseteq [n],\\ |S|=k} \ket{S} \ket{x_S}, \]
where we write $\ket{x_S} := \otimes_{i\in S} \ket{x_i}$.
Let $M_{n, k}$ be a measurement (POVM) for distinguishing 
the states $\ket{\psi^k_x}$, and assume that $M_{n, k}$ satisfies the following property: for $k \ge n - \sqrt{n}$, and all $x$, the expected Hamming distance of the outcome $\widetilde{x}$ from $x$ is upper bounded by a constant. By Lemma \ref{lem:distinguish}, such a measurement $M_{n,k}$ indeed exists.
We can express $M_{n, k}$ as a two-step process, 
with the first step being a unitary transformation $U_{n, k}$ that maps $\ket{\psi^k_x}$
to a state in $\H_{o}\otimes \H_g$ (where $\H_o$ is the output register
and $\H_g$ is the rest of the state) and the second step being the measurement of 
$\H_{o}$ (with the measurement result interpreted as a guess $\widetilde{x}$
for the hidden bit-string $x$).

We define a sequence of numbers $n_0, \ldots, n_l$, with
$n_l=n$ and $n_{i-1}=\lceil n_i -\sqrt{n_i} \rceil$. 
Our algorithm consists of Stages 0, 1, $\ldots$, $l$.

{\bf Stage 0.} Generate $\ket{\psi^{n_0}_x}$ by first creating
$\sum_{S: S\subseteq [n],\\ |S|=n_0} \ket{S}$ and then querying each $x_i, i\in S$.

{\bf Stage $s$ ($s>0$).} Stage $s$ receives $\ket{\psi^{n_{s-1}}_x}$ as the input and
outputs $\ket{\psi^{n_s}_n}$. It consists of the following steps:
\begin{enumerate}
\item
With no queries, transform $\ket{\psi^{n_{s-1}}_x}$ to
\[ \sum_{S':S'\subseteq [n],\\ |S'|=n_s} \ket{S'} 
\sum_{S: S\subseteq S',\\ |S|=n_{s-1}} \ket{S} \ket{x_S} =
 \sum_{S:S\subseteq [n],\\ |S|=n_s} \ket{S} 
\ket{\psi^{n_{s-1}}_{x_S}} \]
\item
\label{st:subset}
Apply $U_{n_s, n_{s-1}}$ on the register holding $\ket{\psi^{n_{s-1}}_{x_S}}$. 
Use a subset query to verify whether $\widetilde{x_S}$ in the $\H_o$ register
is indeed equal to $x_S$.
Measure the outcome of the subset query.
\item
If the subset query answers that $\widetilde{x_S}=x_S$, we have a state
\[  \sum_{S:S\subseteq [n],\\ |S|=n_s} \ket{S} \ket{x_{S}} \ket{\varphi_{S}} \]
where $\ket{\varphi_{S}}$ is a state in the $\H_g$ register.
Apply the transformation $\ket{S}\ket{\varphi_{S}} \mapsto \ket{S}\ket{0}$
(which requires no queries) and discard the $\H_g$ register.
\item
\label{st:selfcorrect}
If the subset query answers that $\widetilde{x_S}\neq x_S$, repeat 
the following sequence of transformations:
\begin{enumerate}
\item
\label{st:repeat}
Use a binary search with $\lceil \log n_s \rceil$ substring queries (performed coherently,
without measurements) to find one $i$ for which $(\widetilde{x_S})_i\neq (x_S)_i$. If the algorithm succeeds, change $(\widetilde{x_S})_i$ to the opposite value.
\item
Use a subset query to verify whether $\widetilde{x_S}$ in the output register
is now equal to $x_S$. Measure the outcome of the subset query.
\item
If the subset query answers that $\widetilde{x_S}\neq x_S$, return to step \ref{st:repeat}.
\item
If the subset query answers that $\widetilde{x_S}=x_S$, we have a state
\[  \sum_{S:S\subseteq [n],\\ |S|=n_s} \ket{S} \ket{x_{S}} \ket{\varphi_{S}} \]
where $\ket{\varphi_{S}}$ is some ``garbage'' state consisting of the contents of
$\H_g$ after $U_{n_s, n_{s-1}}$ and leftover information from the subset queries
in step \ref{st:repeat}. 
Apply the transformation $\ket{S}\ket{\varphi_{S}} \mapsto \ket{S}\ket{0}$
(which requires no queries) and discard the register holding the $\ket{0}$ state.
\end{enumerate}
\end{enumerate}

The expected number of queries for Stage $s$ ($s>0$) is 1 for step \ref{st:subset} and 
$O(D \log n)$ for step \ref{st:selfcorrect}, where $D$ is the expected
number of errors in the answer $\widetilde{x_S}$. Since $D=O(1)$ by Lemma \ref{lem:distinguish}, the expected number of queries is $O(\log n)$.

For the number of stages, we can choose $l=O(\sqrt{n})$ so that $n_0= O(\sqrt{n})$.
Then, the algorithm uses $n_0=O(\sqrt{n})$ queries in Stage 0 and expected $O(\log n)$ queries
in each of the next $O(\sqrt{n})$ stages. Hence, the expected total number of queries is 
$O(\sqrt{n}\log n)$.


\section{The state discrimination problem}
\label{sec:discrim}

Our next task is to prove Lemma \ref{lem:distinguish}, i.e.\ to show that, given the state
\[ \ket{\psi^k_x} := \frac{1}{\binom{n}{k}^{1/2}} \sum_{S \subseteq [n],|S|=k} \ket{S}\ket{x_S}, \]
for any $k = n - O(\sqrt{n})$, we can output $\widetilde{x}$ such that the expected Hamming distance between $\widetilde{x}$ and $x$ is constant. We will achieve this using the pretty good measurement~\cite{belavkin75,hausladen94} (PGM), which is also known as the square root measurement~\cite{eldar01} and is defined as follows. Given a set $\{\ket{\phi_i}\}$ of pure states, set $\rho = \sum_i \proj{\phi_i}$. Then the measurement vector corresponding to state $\ket{\phi_i}$ is $\ket{\mu_i} := \rho^{-1/2} \ket{\phi_i}$, the inverse being taken on the support of $\rho$. This is a valid POVM because
\[ \sum_i \proj{\mu_i} = \sum_x \rho^{-1/2} \proj{\phi_i} \rho^{-1/2} = \rho^{-1/2} \left( \sum_i \proj{\phi_i} \right) \rho^{-1/2} = I. \]
The probability that the PGM outputs $j$ on input $\ket{\phi_i}$ is precisely $|\sqrt{G}_{ij}|^2$, where $G$ is the Gram matrix of the states $\{ \ket{\phi_i} \}$, $G_{ij} = \ip{\phi_i}{\phi_j}$. In our case, we have
\[ G_{xy} = \ip{\psi^k_x}{\psi^k_y} = \frac{1}{\binom{n}{k}} \sum_{S \subseteq [n],|S|=k} [x_S = y_S] = \frac{\binom{n-d(x,y)}{k}}{\binom{n}{k}}. \]
As $G_{xy}$ depends only on $x \oplus y$, $G$ is diagonalised by the Fourier transform over $\Z_2^n$. Eigenvalues $\lambda(s)$ of $G$, indexed by bit-strings $s \in \{0,1\}^n$, are thus given by the Fourier transform of the function $f(x) = G_{x0} = \frac{\binom{n-|x|}{k}}{\binom{n}{k}}$.
Indeed, we have
\be
\label{eq:evs} \lambda(s) = \sum_{x \in \{0,1\}^n} (-1)^{s \cdot x} f(x) = \frac{1}{\binom{n}{k}} \sum_{x \in \{0,1\}^n} (-1)^{s \cdot x} \binom{n-|x|}{k} = 2^{n-k} \frac{\binom{n-|s|}{n-k}}{\binom{n}{k}},
\ee
where the final equality is an identity of Delsarte~\cite[Eq.\ (48)]{krasikov99}.

As $\sqrt{G}_{xy}$ also depends only on $x \oplus y$, the expected Hamming distance of the output $y$ from the input $x$ does not depend on $x$ and is equal to
\[ D_k := \sum_{y \in \{0,1\}^n} d(x,y) (\sqrt{G}_{xy})^2 = \sum_{y \in \{0,1\}^n} |y| (\sqrt{G}_{0y})^2. \]
We now proceed to upper bound this quantity using Fourier duality. Observe that $D_k$ can be viewed as the inner product between the functions $f(x) = |x|$ and $g(x) = (\sqrt{G}_{0x})^2$. By Plancherel's theorem we have
\[ \sum_{x \in \{0,1\}^n} f(x) g(x) = 2^n \sum_{s \in \{0,1\}^n} \hat{f}(s) \hat{g}(s), \]
where for any function $f$ we define $\hat{f}(s) = \frac{1}{2^n} \sum_{x \in \{0,1\}^n} (-1)^{s \cdot x} f(x)$. One can easily calculate that
\[ \hat{f}(s) = \begin{cases} \frac{n}{2} & \text{if $s = 0^n$}\\ -\frac{1}{2} & \text{if $|s|=1$}\\ 0 & \text{otherwise.} \end{cases} \]
On the other hand, we can compute the Fourier spectrum of $g$ as follows. As the Fourier transform turns multiplication into convolution, we have
\[ \hat{g}(s) = \widehat{\sqrt{g}\sqrt{g}}(s) = \sum_{t \in \{0,1\}^n} \widehat{\sqrt{g}}(t) \widehat{\sqrt{g}}(s+t). \]
We can therefore determine the Fourier spectrum of $g$ directly from that of the function $\sqrt{g}(x) = \sqrt{G}_{0x}$. We have already computed this Fourier transform; up to normalisation, it is just the function giving the eigenvalues of $\sqrt{G}$, or in other words the function $\sqrt{\lambda}(s)$. We thus obtain
\beas \hat{g}(s) &=& \frac{2^{-n-k}}{\binom{n}{k}} \sum_{t \in \{0,1\}^n} \binom{n-|t|}{n-k}^{1/2} \binom{n-d(s,t)}{n-k}^{1/2}\\
&=& \frac{2^{-n-k}}{\binom{n}{k}} \sum_{t,u=0}^n |\{y:|y|=t,d(s,y)=u\}| \binom{n-t}{n-k}^{1/2} \binom{n-u}{n-k}^{1/2}.
\eeas
This is a fairly complicated expression, but as $\hat{f}(s)=0$ when $|s|>1$, we only need to calculate a few special cases. In particular, we have $\hat{g}(0^n) = 1/2^n$ and
\beas \hat{g}(e_i) &=& \frac{2^{-n-k}}{\binom{n}{k}} \sum_{t=0}^n \binom{n-t}{n-k}^{1/2} \left(\binom{n-1}{t-1}\binom{n-t+1}{n-k}^{1/2} + \binom{n-1}{t}\binom{n-t-1}{n-k}^{1/2} \right)\\
&=& 2^{-n-k} \sum_{t=0}^n \binom{k}{t} \left(\frac{t}{n} \left(\frac{n-t+1}{k-t+1} \right)^{1/2} + \left(1-\frac{t}{n}\right)\left(\frac{k-t}{n-t}\right)^{1/2} \right)\\
&=:& 2^{-n-k} \sum_{t=0}^n \binom{k}{t} T_t
\eeas
for bit-strings $e_i$ of Hamming weight 1. Thus $2^n \hat{g}(e_i)$ is equal to 1 when $k=n$ and will be close to 1 when $k$ is close to $n$. Indeed, set $k = n - c \sqrt{n}$ and consider terms $T_t$ in this sum such that $t = n/2 + a \sqrt{n}$, for $a \in \R$. Then, using the lower bound $\sqrt{x} \ge \frac{3}{2}x - \frac{1}{2}x^2$, which is valid for $x \ge 0$, we have
\beas
T_t &=& \left(\frac{1}{2} + \frac{a}{\sqrt{n}}\right)\left(1 + \frac{c}{\sqrt{n}/2 - (a+c) + 1/\sqrt{n}} \right)^{1/2} + \left(\frac{1}{2} - \frac{a}{\sqrt{n}}\right)\left(1 - \frac{c}{\sqrt{n}/2 - a} \right)^{1/2}\\
&\ge& \left(\frac{1}{2} + \frac{a}{\sqrt{n}}\right)\left(1 + \frac{c}{\sqrt{n}/2 - a} \right)^{1/2} + \left(\frac{1}{2} - \frac{a}{\sqrt{n}}\right)\left(1 - \frac{c}{\sqrt{n}/2 - a} \right)^{1/2}\\
&\ge& 1 - \frac{1}{2} \left( \frac{c}{\sqrt{n}/2-a} \right)^2 + \frac{ac}{\sqrt{n}(\sqrt{n}/2-a)}\\
&=& 1 - O(1/n)
\eeas
for constant $a$ and $c$. We thus have $2^n \hat{g}(e_i) \ge 1 - O(1/n)$. Computing the inner product $2^n \sum_{s \in \{0,1\}^n} \hat{f}(s) \hat{g}(s)$, we get
\[ D_k = \frac{n}{2}\left(1 - \hat{g}(e_i)\right) = O(1) \]
as desired. In Appendix \ref{app:discrim}, we continue the analysis of the state discrimination problem by giving quite tight upper and lower bounds on the probability of identifying $x$ exactly.


\section{Algorithms for combinatorial group testing}
\label{sec:prelim}

We now consider the related problem of combinatorial group testing, beginning by considering the very special case of CGT where $k=1$. Classically, this problem can be solved with certainty using binary search in $\lceil \log_2 n \rceil$ queries, which is asymptotically tight by an information-theoretic argument.

\begin{lem}
If $k=1$, CGT can be solved exactly using one quantum query.
\end{lem}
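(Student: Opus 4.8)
The plan is to reduce this special case of CGT to the Bernstein--Vazirani problem and then apply the standard Bernstein--Vazirani algorithm, which uses only a single query. The key observation is that under the promise $|x| \le 1$, the OR-query behaves exactly like a parity (mod-2 inner product) query. Concretely, identify a query set $S \subseteq [n]$ with its characteristic vector $s \in \{0,1\}^n$. If $x = 0^n$ then $Q_x(S) = 0$ for every $S$, while if $x = e_j$ for some $j$ then $Q_x(S) = 1$ precisely when $j \in S$, i.e.\ when $s_j = 1$. In both cases $Q_x(S)$ equals $s \cdot x \bmod 2$, since the Hamming-weight promise guarantees that at most one term of the dot product is nonzero (so the OR and the sum coincide). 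Thus the CGT oracle, restricted to weight-$\le 1$ inputs, is precisely a Bernstein--Vazirani oracle for the hidden string $x$.

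Given this reduction, the algorithm is the usual one. First I would initialise the query register to $\ket{0^n}$ and an answer qubit to $\ket{1}$, and apply Hadamard gates to both, producing $\frac{1}{\sqrt{2^n}} \sum_{s \in \{0,1\}^n} \ket{s}$ in the query register together with $\ket{-}$ in the answer qubit. A single application of the oracle, via phase kickback, then maps this state to $\frac{1}{\sqrt{2^n}} \sum_{s} (-1)^{s \cdot x} \ket{s} \otimes \ket{-}$. Finally I would apply Hadamard gates to the query register, which sends $\frac{1}{\sqrt{2^n}} \sum_{s} (-1)^{s \cdot x} \ket{s}$ to $\ket{x}$, and measure to recover $x$ with certainty.

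The verification is entirely standard once the reduction is in place, so the only real content is the identity $Q_x(S) = s \cdot x \bmod 2$ for $|x| \le 1$; I do not expect a serious obstacle here, and the main step is simply to state the reduction cleanly. One small point worth checking is that the degenerate cases are handled: the empty query $S = \emptyset$ (i.e.\ $s = 0^n$) gives $Q_x(\emptyset) = 0 = s \cdot x$, and the all-zero input $x = 0^n$ is recovered because Bernstein--Vazirani outputs $\ket{0^n}$ on the trivial oracle. Both are consistent, so the procedure identifies $x$ exactly using one quantum query.
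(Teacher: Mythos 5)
Your proof is correct and matches the paper's own argument almost exactly: the paper likewise observes that $\bigvee_i x_i s_i = x \cdot s$ whenever $|x| \le 1$, and then runs the standard phase-kickback/Hadamard (Bernstein--Vazirani-style) routine to recover $x$ with a single query. No gaps; the handling of the degenerate cases $x = 0^n$ and $s = 0^n$ is a nice touch of care, though the paper does not bother to spell it out.
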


\begin{proof}
The result follows from observing that, in order to learn $x$, it suffices to compute the function $x \cdot s$ for arbitrary $s \in \{0,1\}^n$ (this is the same observation that underpins the quantum oracle interrogation algorithm of van Dam~\cite{vandam98}). In the CGT problem, we have access to an oracle which computes $f(s) = \bigvee_i x_i s_i$ for arbitrary $s \in \{0,1\}^n$. But if $|x| \le 1$, then for any $s$, $\bigvee_i x_i s_i = x \cdot s$. 

Formally, the quantum algorithm proceeds as follows.

\begin{enumerate}
\item Create the state $\frac{1}{\sqrt{2^{n+1}}} \sum_{s \in \{0,1\}^n} \ket{s}(\ket{0}-\ket{1})$.
\item Apply the oracle to create the state
\[ \frac{1}{\sqrt{2^{n+1}}} \sum_{s \in \{0,1\}^n} (-1)^{\bigvee_i s_i x_i} \ket{s}(\ket{0}-\ket{1}) = \frac{1}{\sqrt{2^{n+1}}} \sum_{s \in \{0,1\}^n} (-1)^{s \cdot x} \ket{s}(\ket{0}-\ket{1}) \]
\item Apply Hadamard gates to each qubit of the first register and measure to obtain $x$.
\end{enumerate}
\end{proof}

Call the above algorithm the $k=1$ algorithm. We can extend this idea to obtain a simple quantum algorithm for CGT which achieves significantly better query complexity than is possible classically, by not depending on $n$. First assume that $|x|=k$, and let $I$ be the set of 1-indices of $x$ which are currently known (initially, $I = \emptyset$). The algorithm is based on the following subroutine.

\begin{enumerate}
\item Construct a subset $S \subseteq [n]\backslash I$ by including each $i \in [n]\backslash I$ with independent probability $1/k$. Write $S_j$ for the $j$'th element of $S$.
\item Create the state $\left(\frac{1}{\sqrt{2^{|S|+1}}} \sum_{t \in \{0,1\}^{|S|}} \ket{t}\right) (\ket{0}-\ket{1})$.
\item Apply the oracle to create the state
\[ \frac{1}{\sqrt{2^{|S|+1}}} \sum_{t \in \{0,1\}^{|S|}} (-1)^{\bigvee_{i=1}^{|S|} t_i x_{S_i}} \ket{t}(\ket{0}-\ket{1}); \]
henceforth ignore the second register.
\item \label{step:hadamard} Apply Hadamard gates to each qubit of the first register to produce the state
\[ \frac{1}{2^{|S|}} \sum_{y \in \{0,1\}^{|S|}} \left(\sum_{t \in \{0,1\}^{|S|}} (-1)^{\bigvee_{i=1}^{|S|} t_i x_{S_i} + \sum_{i=1}^{|S|} t_i y_i} \right) \ket{y}. \]
\item Measure to obtain $y \in \{0,1\}^{|S|}$.
\item For each $i$ such that $y_i = 1$, add $S_i$ to $I$. Reduce $k$ by $|y|$.
\end{enumerate}

Observe that, for all $i$ such that $x_{S_i}=0$, the state produced in Step \ref{step:hadamard} has zero amplitude on all $y$ such that $y_i=1$. Thus, for each index $i$ added to $I$, $x_{S_i}=1$. On the other hand, the probability that the outcome $y=0^{|S|}$ is obtained is exactly $(1-2^{1-|x_S|})^2$, so the algorithm finds at least one 1-index with probability $1-(1-2^{1-|x_S|})^2$. In particular, if $S$ contains exactly one 1-index $i$ of $x$, which will occur with probability at least $(1-1/k)^{k-1} \ge 1/e$, we are guaranteed to learn $i$. In order to learn $x$ completely, the expected overall number of queries used is thus $O(k)$.

If we only know the upper bound that $|x|\le k$, we can simply use the above algorithm with guesses $k'=2^i$, $i=0,\dots,\lceil \log_2 k \rceil$. For at least one of these guesses $k'$, the probability that the corresponding subset $S$ contains exactly one 1-index is lower bounded by a constant, so the expected number of queries required to learn one 1-index of $x$ is $O(\log k)$. Observe that this algorithm is Las Vegas (i.e.\ it always succeeds eventually), as we can check whether we have found all 1-indices of $x$ by querying the complement of the 1-indices found so far.


\section{An almost matching lower bound}
\label{sec:lower}

We finally prove that our results for the search with wildcards and combinatorial group testing problems are almost optimal. We will use the following very general ``strong weighted adversary'' bound formulated by Zhang~\cite{zhang05} (for the statement given here, see~\cite{cleve12,spalek06}).

\begin{thm}
\label{thm:swadv}
Let $f:S \rightarrow T$ be a function and let $Q$ be a finite set of possible query strings. Let $x \in S$ be an initially unknown input which is accessed via an oracle $O_x$ performing the map $O_x \ket{q}\ket{z} = \ket{q}\ket{z \oplus \zeta(x,q)}$, where $q \in Q$, $z \in \{0,1\}$, and $\zeta:S \times Q \rightarrow \{0,1\}$ is a function specifying the response to oracle queries. Also let $w$, $w'$ be weight schemes such that:
\begin{itemize}
\item Each pair $(x,y) \in S \times S$ is assigned a non-negative weight $w(x,y)=w(y,x)$ such that $w(x,y)=0$ whenever $f(x)=f(y)$;
\item Each triple $(x,y,q) \in S \times S \times Q$ is assigned a non-negative weight $w'(x,y,q)$ such that $w'(x,y,q)=0$ for all $x$, $y$, $q$ such that $\zeta(x,q) = \zeta(y,q)$ or $f(x)=f(y)$, and $w'(x,y,q)w'(y,x,q) \ge w(x,y)^2$ for all $x$, $y$, $q$ such that $\zeta(x,q) \neq \zeta(y,q)$ and $f(x) \neq f(y)$.
\end{itemize}
For all $x \in S$ and $q \in Q$, set $\text{wt}(x) = \sum_y w(x,y)$ and $v(x,q) = \sum_y w'(x,y,q)$. Then any quantum query algorithm that computes $f(x)$ with success probability at least $2/3$ on every input $x$ must make
\[ \Omega\left( \min_{\substack{x,y,q;\,w(x,y)>0, \\ \zeta(x,q) \neq \zeta(y,q)}} \sqrt{\frac{\text{wt}(x) \text{wt}(y)}{v(x,q)v(y,q)}} \right) \]
queries to the oracle $O_x$.
\end{thm}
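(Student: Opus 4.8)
The plan is to prove the bound by the quantum adversary method, following Ambainis's weighted scheme. Form the symmetric matrix $\Gamma$ with entries $\Gamma_{xy}=w(x,y)$, which by hypothesis vanishes whenever $f(x)=f(y)$, and fix input amplitudes $\alpha_x\propto\sqrt{\mathrm{wt}(x)}$ normalised so that $\sum_x\alpha_x^2=1$. Imagine running the algorithm coherently on the superposed input $\sum_x\alpha_x\ket{x}\ket{\psi^0}$, so that after $t$ oracle calls the joint state is $\sum_x\alpha_x\ket{x}\ket{\psi_x^t}$, where $\ket{\psi_x^t}$ is the algorithm's workspace on input $x$. I would track the progress measure
\be
W^t=\sum_{x,y}w(x,y)\,\alpha_x\alpha_y\,\ip{\psi_x^t}{\psi_y^t}.
\ee

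First I would establish the two endpoint estimates. At $t=0$ the workspace is independent of $x$, so $\ip{\psi_x^0}{\psi_y^0}=1$ and $W^0=\sum_{x,y}w(x,y)\alpha_x\alpha_y$, a fixed positive quantity which I call $W_0$. At the end, outputting $f(x)$ correctly with probability at least $2/3$ on each input forces any pair with $f(x)\neq f(y)$ to be distinguishable by the final measurement, whence $|\ip{\psi_x^T}{\psi_y^T}|\le 2\sqrt{2}/3<1$ by the standard relation between distinguishing probability and overlap. Since $w(x,y)=0$ unless $f(x)\neq f(y)$, this yields $|W^T|\le (2\sqrt{2}/3)\,W_0$, so $W$ must drop by a constant fraction of $W_0$ over the course of the computation.

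The core of the argument is to bound the change $|W^{t+1}-W^t|$ caused by a single oracle call. Expanding each $\ket{\psi_x^t}$ in the query-register basis $\ket{q}$ and using $O_x\ket{q}\ket{z}=\ket{q}\ket{z\oplus\zeta(x,q)}$, the components of a pair $(x,y)$ with $\zeta(x,q)=\zeta(y,q)$ are left invariant, so only the components with $\zeta(x,q)\neq\zeta(y,q)$ contribute. This bounds $|W^{t+1}-W^t|$ by a sum over such triples of $w(x,y)\alpha_x\alpha_y$ times products of query-register norms. Here I would invoke the defining inequality $w'(x,y,q)w'(y,x,q)\ge w(x,y)^2$ to replace $w(x,y)$ by a \emph{symmetric split} $\sqrt{w'(x,y,q)}\cdot\sqrt{w'(y,x,q)}$, arranged so that after Cauchy--Schwarz the resulting index sums telescope against the normalisation $\alpha_x^2\propto\mathrm{wt}(x)$, using $\sum_y w'(x,y,q)=v(x,q)$ and $\sum_y w(x,y)=\mathrm{wt}(x)$. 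The upshot is $|W^{t+1}-W^t|=O(W_0\cdot M)$, where $M:=\max\sqrt{v(x,q)v(y,q)/(\mathrm{wt}(x)\mathrm{wt}(y))}$ over triples with $w(x,y)>0$ and $\zeta(x,q)\neq\zeta(y,q)$.

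Combining the three estimates, $W$ begins at $W_0$, ends below $(2\sqrt{2}/3)W_0$, and moves by at most $O(W_0\cdot M)$ per query, so the number of queries satisfies $T=\Omega(1/M)=\Omega\big(\min\sqrt{\mathrm{wt}(x)\mathrm{wt}(y)/(v(x,q)v(y,q))}\big)$ with the minimum over the stated triples, as claimed. I expect the main obstacle to be the per-query bound: arranging the split of $w(x,y)$ through $w'$ so that Cauchy--Schwarz yields a \emph{per-triple} ratio $\sqrt{v(x,q)v(y,q)/(\mathrm{wt}(x)\mathrm{wt}(y))}$, rather than a cruder product of separate row-sum maxima, requires choosing the (generally asymmetric) weights inside the split carefully and exploiting the query-dependence that $w'$ allows. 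The endpoint estimates and the final telescoping are routine once this is in place.
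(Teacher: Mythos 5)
First, a point of comparison: the paper does not prove this theorem at all --- it is imported verbatim from Zhang~\cite{zhang05}, with the formulation borrowed from~\cite{cleve12,spalek06}. So you are reconstructing the literature proof, not the paper's. Your skeleton is indeed the standard one (progress measure over weighted pairs, endpoint estimates via distinguishability, a per-query change bound), and your endpoints are correct: $W^0=\sum_{x,y}w(x,y)\alpha_x\alpha_y$ initially, and success probability $2/3$ forces $|\ip{\psi_x^T}{\psi_y^T}|\le 2\sqrt{2}/3$ whenever $f(x)\neq f(y)$. But the entire content of the theorem is the per-query bound, and that is precisely the step you do not carry out --- you describe what you would like Cauchy--Schwarz to do and acknowledge that ``choosing the weights inside the split carefully'' is the obstacle. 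As written, this is a gap at the heart of the argument, not a routine verification.

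Moreover, the one concrete commitment you do make --- input amplitudes $\alpha_x\propto\sqrt{\mathrm{wt}(x)}$ --- points the argument in the wrong direction. The standard proof (Ambainis's weighted adversary, as generalised by Zhang) weights every input \emph{equally}: one tracks $W^t=\sum_{x,y}w(x,y)\ip{\psi_x^t}{\psi_y^t}$ with each $\ket{\psi_x^t}$ normalised, so $W^0=\sum_x \mathrm{wt}(x)$. Writing $a_{x,q}$ for the norm of the query-$q$ component of $\ket{\psi_x^t}$ and $N$ for the minimum in the theorem, one bounds the change of each term by $2w(x,y)a_{x,q}a_{y,q}$ over triples with $\zeta(x,q)\neq\zeta(y,q)$, applies $2AB\le rA^2+r^{-1}B^2$ with the per-triple choice
\be
r_{xyq}=\sqrt{\frac{w'(x,y,q)}{w'(y,x,q)}}\cdot\sqrt{\frac{v(y,q)\,\mathrm{wt}(x)}{v(x,q)\,\mathrm{wt}(y)}},
\ee
and then uses $w(x,y)\le\sqrt{w'(x,y,q)w'(y,x,q)}$, the validity constraint $\sqrt{v(x,q)v(y,q)}\le\sqrt{\mathrm{wt}(x)\mathrm{wt}(y)}/N$, and the row sums $\sum_y w'(x,y,q)=v(x,q)$ to get a per-query change of at most $(2/N)\sum_x\mathrm{wt}(x)=2W^0/N$; the theorem follows. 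If instead each pair carries the extra factor $\alpha_x\alpha_y\propto\sqrt{\mathrm{wt}(x)\mathrm{wt}(y)}$ as you propose, the identical telescoping only yields a per-query change of $(2/N)\sum_x\alpha_x^2\,\mathrm{wt}(x)=(2/N)\sum_x\mathrm{wt}(x)^2/\sum_x\mathrm{wt}(x)$, whereas your $W^0=\sum_{x,y}w(x,y)\sqrt{\mathrm{wt}(x)\mathrm{wt}(y)}/\sum_x\mathrm{wt}(x)$; since $\sum_x\mathrm{wt}(x)^2=\sum_{x,y}w(x,y)\tfrac{\mathrm{wt}(x)+\mathrm{wt}(y)}{2}\ge\sum_{x,y}w(x,y)\sqrt{\mathrm{wt}(x)\mathrm{wt}(y)}$ by AM--GM, the per-query bound always exceeds $2W^0/N$, and the excess can be unbounded. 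Concretely, for the Grover weight scheme ($x_0=0^n$ against $y_i=e_i$, unit weights, so $N=\sqrt{n}$) your normalisation gives $W^0=\Theta(\sqrt{n})$ but a per-query change bound of $\Theta(\sqrt{n})$ as well, so the argument as arranged proves only $T=\Omega(1)$ rather than $\Omega(\sqrt{n})$. The repair is simply to drop the amplitudes $\alpha_x$ (or, equivalently, to pass to the spectral formulation where the input weighting is the principal eigenvector of the weight matrix, which is \emph{not} $\sqrt{\mathrm{wt}}$ in general); with your choice, the ``routine'' telescoping you defer does not go through.
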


\begin{lem}
\label{lem:wlower}
Any quantum algorithm which solves search with wildcards on $n$ bits with worst-case success probability $2/3$ must make $\Omega(\sqrt{n})$ oracle queries.
\end{lem}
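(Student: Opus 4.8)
The plan is to apply the strong weighted adversary bound (Theorem~\ref{thm:swadv}) to a carefully chosen hard family of inputs for the search-with-wildcards problem. The natural choice is to let $f$ be the identity function on a set $S$ of $n$-bit strings, so that computing $f(x)$ means recovering $x$ completely; the query set $Q$ consists of all wildcard queries $(T,y)$, and $\zeta(x,(T,y)) = [x_T = y]$. The key decision is which inputs to put in $S$ and how to weight the pairs. A clean choice is to take $S = \{0^n\} \cup \{e_i : i \in [n]\}$, i.e.\ the all-zeros string together with the $n$ weight-one strings, and to assign nonzero weight only to pairs $(0^n, e_i)$, so that the only ``confusable'' structure the adversary sees is distinguishing the empty string from each of its single-bit perturbations.

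With this family I would set $w(0^n, e_i) = 1$ for each $i$ (and symmetrically), with all other weights zero, so that $\text{wt}(0^n) = n$ while $\text{wt}(e_i) = 1$. The crucial point is to count, for a fixed query $q = (T,y)$, how many pairs $(0^n, e_i)$ are actually separated by $q$, i.e.\ satisfy $\zeta(0^n, q) \neq \zeta(e_i, q)$. A query distinguishes $0^n$ from $e_i$ only when flipping bit $i$ changes the answer; since $\zeta$ is $[x_T = y]$, this requires $i \in T$ and $y = 0^{|T|}$ on $T$ (so that $0^n$ answers $1$ and $e_i$ answers $0$, or conversely), and in that case the query separates $0^n$ from $e_i$ for \emph{every} $i \in T$. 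I would then choose the edge weights $w'$ to balance the two sides of each separated pair: setting $w'(0^n, e_i, q)$ and $w'(e_i, 0^n, q)$ so that their product is at least $w(0^n,e_i)^2 = 1$ while keeping $v(0^n,q) = \sum_i w'(0^n,e_i,q)$ and $v(e_i,q)$ as small as possible. Because a single query can separate $0^n$ from up to $|T| \le n$ of the $e_i$'s at once, the load $v(0^n,q)$ on the heavy vertex can be as large as order $n$, while $v(e_i,q)$ on each light vertex is order $1$; the asymmetric weighting $\text{wt}(0^n)=n$ versus $\text{wt}(e_i)=1$ is exactly what compensates for this.

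Plugging into the bound, the quantity to minimise is $\sqrt{\text{wt}(0^n)\,\text{wt}(e_i) / (v(0^n,q)\,v(e_i,q))}$ over separated pairs. With $\text{wt}(0^n)=n$, $\text{wt}(e_i)=1$, $v(0^n,q)=O(n)$, and $v(e_i,q)=O(1)$, this ratio is $\sqrt{n \cdot 1 / (n \cdot 1)}$ up to constants, which does not immediately give $\sqrt{n}$ --- so the main obstacle will be tuning the weights so that the loads on the two endpoints are genuinely \emph{unbalanced} in our favour. The trick is to weight $w'(0^n, e_i, q)$ small and $w'(e_i, 0^n, q)$ large (subject to their product being $\ge 1$), say $w'(0^n,e_i,q) = 1/\sqrt{n}$ and $w'(e_i,0^n,q) = \sqrt{n}$; then $v(0^n,q) = O(n)\cdot(1/\sqrt{n}) = O(\sqrt{n})$ while $v(e_i,q) = O(\sqrt{n})$, giving a ratio of $\sqrt{n \cdot 1/(\sqrt{n}\cdot\sqrt{n})} = \sqrt{n\cdot 1/n}$, which is still only constant. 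The resolution is to recognise that on the light side only one pair is ever incident per relevant query, so $v(e_i,q) = O(1)$ genuinely holds, and the correct split is to make $v(0^n,q)$ absorb the factor of $n$ symmetrically: choosing $w'(0^n,e_i,q)=1$ and $w'(e_i,0^n,q)=1$ yields $v(0^n,q)=O(\sqrt{n})$ after restricting attention to queries with $|T|=O(\sqrt{n})$ via an averaging/worst-case argument over query sizes. I would therefore finish by arguing that any single query of weight-pattern $(T,y)$ can be ``charged'' so that the effective degree of the heavy vertex is $O(\sqrt{n})$ while each light vertex has degree $O(1)$, making the minimised ratio $\Omega(\sqrt{n/(\sqrt{n}\cdot 1)}) = \Omega(n^{1/4})\cdots$ --- and since even this does not reach $\sqrt n$, the genuinely delicate step is the weight design, which I expect to require the full asymmetric scheme rather than the naive uniform one. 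Getting the correct $\Omega(\sqrt{n})$ will hinge entirely on this balancing, so I would treat the weight-scheme optimisation as the crux and verify the two feasibility constraints in Theorem~\ref{thm:swadv} explicitly once the weights are fixed.
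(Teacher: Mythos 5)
Your proposal has a genuine gap, and it is not just a matter of tuning the weights: the hard family you chose (the star $\{0^n\} \cup \{e_i : i \in [n]\}$ with weight only on the pairs $(0^n,e_i)$) provably cannot yield more than a constant lower bound from Theorem~\ref{thm:swadv}, no matter how $w'$ is chosen. Consider the single query $q^* = ([n], 0^n)$, which separates $0^n$ from \emph{every} $e_i$ simultaneously. Write $w_i = w(0^n,e_i)$, $b_i = w'(0^n,e_i,q^*)$ and $a_i = w'(e_i,0^n,q^*)$; feasibility forces $a_i b_i \ge w_i^2$, and here $\text{wt}(0^n) = \sum_j w_j$, $\text{wt}(e_i) = w_i$, $v(0^n,q^*) = \sum_j b_j$, $v(e_i,q^*) = a_i$. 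Then for each $i$,
\[ \frac{\text{wt}(0^n)\,\text{wt}(e_i)}{v(0^n,q^*)\,v(e_i,q^*)} \;=\; \frac{\bigl(\sum_j w_j\bigr) w_i}{\bigl(\sum_j b_j\bigr) a_i} \;\le\; \frac{\bigl(\sum_j w_j\bigr) b_i}{\bigl(\sum_j b_j\bigr) w_i}, \]
and since $\min_i (b_i/w_i) \le \bigl(\sum_j b_j\bigr)/\bigl(\sum_j w_j\bigr)$, the minimum over separated pairs of this ratio is at most $1$. So the star family is a dead end, which is exactly why your various weight choices kept collapsing to $O(1)$ or $O(n^{1/4})$. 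Your attempted rescue --- ``restricting attention to queries with $|T| = O(\sqrt{n})$'' --- is also not permitted: the minimum in Theorem~\ref{thm:swadv} ranges over \emph{all} queries that separate a positively-weighted pair, and on any such pair $w'$ must satisfy $w'(x,y,q)w'(y,x,q) \ge w(x,y)^2 > 0$, so the query $q^*$ cannot be excluded or zeroed out.

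The idea you are missing is to make \emph{both} endpoints of every weighted pair heavy, which requires embedding the pairs in the full hypercube rather than a star. The paper takes $S = \{0,1\}^n$ with $f$ the identity, puts $w(x,y)=1$ on every pair at Hamming distance $1$ (so $\text{wt}(x) = n$ for \emph{every} $x$), and uses the trivial uniform scheme $w'(x,y,q) = 1$ on separated weighted pairs. The whole argument is then the observation that for a query $q = (S,t)$, if $\zeta(x,q)=1$ (i.e.\ $x_S = t$) then $v(x,q) = |S| \le n$, while if $\zeta(x,q)=0$ then $v(x,q) \le 1$: a distance-one neighbour $y$ of $x$ with $\zeta(y,q)=1$ must be obtained by flipping the unique position of $S$ where $x_S$ and $t$ disagree (and there is none if they disagree in two or more positions). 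Hence every separated weighted pair has $v(x,q)v(y,q) \le n \cdot 1$, and the adversary quantity is at least $\sqrt{n \cdot n / n} = \sqrt{n}$. In other words, the asymmetry you were trying to engineer by hand through $w'$ comes for free from the structure of wildcard queries (degree $|S|$ on the $\zeta=1$ side versus degree $\le 1$ on the $\zeta=0$ side), once the vertex weights $\text{wt}(x)\,\text{wt}(y) = n^2$ are large enough to pay for it; no delicate balancing is needed.
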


\begin{proof}
In the seach with wildcards problem the input is a string $x \in \{0,1\}^n$, queries $q=(S,t)$ are specified by $S \subseteq [n]$, $t \in \{0,1\}^{|S|}$, and $\zeta(x,q)=1$ if and only if $x_S = t$. We define the following weight scheme: $w(x,y)=1$ if $d(x,y)=1$, and $w(x,y)=0$ otherwise; $w'(x,y,q) = w'(y,x,q) = 1$ if $d(x,y)=1$ and $\zeta(x,q) \neq \zeta(y,q)$, and $w'(x,y,q) = w'(y,x,q) = 0$ otherwise. For any $x \in \{0,1\}^n$, $\text{wt}(x) = n$. On the other hand,
\[ v(x,q) = |\{y:d(x,y)=1,\zeta(x,q) \neq \zeta(y,q)\}| = \begin{cases} |S| & [\zeta(x,q)=1] \\ 1 & [\zeta(x,q)=0,\,d(x_S,t)=1] \\ 0 & [\text{otherwise}] \end{cases}. \]
Hence
\[ \min_{\substack{x,y,q;\,w(x,y)>0 \\ \zeta(x,q) \neq \zeta(y,q)}} \sqrt{\frac{\text{wt}(x) \text{wt}(y)}{v(x,q)v(y,q)}} = \sqrt{n} \]
and the claim follows from Theorem \ref{thm:swadv}.
\end{proof}

Via the reduction from search with wildcards to CGT, Lemma \ref{lem:wlower} implies that CGT requires $\Omega(\sqrt{k})$ quantum queries, completing the proof of Theorem \ref{thm:main}.


\section{Outlook}
\label{sec:outlook}

The major open question left by our work is to fully resolve the quantum query complexity of CGT. A previous version of this paper incorrectly claimed a $O(\sqrt{k} \polylog(k))$ algorithm for this problem; it is a very interesting open problem to determine its true complexity.

An alternative way of considering the CGT problem is as a restricted case of the problem of learning juntas via membership queries~\cite{mossel04,atici07}. A $k$-junta is a boolean function that depends only on at most $k$ input bits. The general problem of learning juntas is defined as follows. Given oracle access to a function $f:\{0,1\}^n \rightarrow \{0,1\}$, and the promise that $f$ is a $k$-junta, output a representation of $f$ (e.g.\ its truth table). It is easy to see that CGT is the special case of this problem where $f$ is restricted to be the OR of at most $k$ of the input bits; our algorithm therefore allows this restricted problem to be solved using $O(k\log k)$ queries. The same algorithm also works if $f$ is promised to be an AND function (i.e.\ $f(x) = \bigwedge_{i \in S} x_i$, for some $S$ such that $|S|=k$), because in this case querying $f(\bar{x})$ and negating the output simulates a query to a function $f'$ such that $f'(x) = \bigvee_{i \in S} x_i$. It would be interesting to determine whether efficient quantum algorithms could be found for other restricted cases of the junta learning problem.

A related question is {\em testing} juntas. In this problem we are given a function $f:\{0,1\}^n \rightarrow \{0,1\}$ such that $f$ either is a $k$-junta, or differs from any $k$-junta on at least $\epsilon2^n$ inputs, and must determine which is the case. Classically, this problem can be solved using $O(k/\epsilon + k \log k)$ queries~\cite{blais09}, while there is an $\Omega(k)$ lower bound on the number of queries required~\cite{chockler04}. In the quantum case, Atici and Servedio have given an $O(k/\epsilon)$-query algorithm~\cite{atici07}. It has recently been observed that there are connections between the junta testing problem and CGT~\cite{garciasoriano12}. It would be very interesting if our results could be used or generalised to give an $O(\sqrt{k}\polylog(k))$ quantum algorithm for testing juntas.


\section*{Acknowledgements}

AM was supported by an EPSRC Postdoctoral Research Fellowship and would like to thank Rapha\"el Clifford for helpful comments on a previous version, and David Gosset, Robin Kothari and Rajat Mittal for helpful discussions. 
AA was supported by ESF project 1DP/1.1.1.2.0/09/APIA/VIAA/044 and the European Commission under the project QCS (Grant No.~255961). We would like to thank Aram Harrow for suggesting a collaboration between the authors and helpful discussions. Special thanks to Han-Hsuan Lin for spotting a crucial error in a previous version.


\appendix

\section{Further analysis of the state discrimination problem}
\label{app:discrim}

In this appendix, we carry out some further analysis of the problem of discriminating the states $\ket{\psi^k_x}$ discussed in Section \ref{sec:discrim}. We have the bound from \cite{montanaro07a} that
\be \label{eq:genlower} (\sqrt{G}_{xx})^2 \ge \frac{1}{\sum_{y \in \{0,1\}^n} |\ip{\psi^k_x}{\psi^k_y}|^2 }, \ee
which allows us to prove the following lower bound on the probability that the PGM outputs $x$ exactly.

\begin{lem}
\label{lem:lower}
Set $k = n - a\sqrt{n}$ for some $0 \le a \le 1$. Then $(\sqrt{G}_{xx})^2 \ge 1 - 2 a^2 - O(1/\sqrt{n})$.
\end{lem}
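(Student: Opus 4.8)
The plan is to invoke the general lower bound \eqref{eq:genlower}, which reduces the lemma to producing a good \emph{upper} bound on $\sum_{y} |\ip{\psi^k_x}{\psi^k_y}|^2$. Since $G_{xy} = \ip{\psi^k_x}{\psi^k_y}$ depends only on $d(x,y)$, this sum is independent of $x$, so I would fix $x = 0^n$. Writing $m := n-k = a\sqrt{n}$ and recalling from Section \ref{sec:discrim} that $G_{0y} = \binom{n-d}{k}/\binom{n}{k}$ whenever $|y|=d$, the first observation is that $\binom{n-d}{k}=0$ once $d > m$, so only the terms $0 \le d \le m$ survive. Grouping the $\binom{n}{d}$ strings at distance $d$ and using $\binom{n-d}{k}/\binom{n}{k} = \prod_{i=0}^{d-1}(m-i)/(n-i)$, a short manipulation of binomial coefficients yields the clean identity
\[ \sum_{y \in \{0,1\}^n} |\ip{\psi^k_x}{\psi^k_y}|^2 = \sum_{d=0}^{m} \binom{n}{d} \left( \frac{\binom{n-d}{k}}{\binom{n}{k}} \right)^2 = \sum_{d=0}^{m} \frac{\binom{m}{d}^2}{\binom{n}{d}}. \]

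The heart of the argument is to bound this last sum by $e^{m^2/n}$. Writing the $d$-th term as $\frac{1}{d!}\prod_{i=0}^{d-1}\frac{(m-i)^2}{n-i}$, I would establish the per-factor inequality $\frac{(m-i)^2}{n-i} \le \frac{m^2}{n}$, valid for all $0 \le i \le m-1$. This is equivalent to $(1-i/m)^2 \le 1-i/n$, which follows from the chain $(1-i/m)^2 \le 1-i/m \le 1-i/n$: the first step because $t(1-t)\ge 0$ for $t=i/m \in [0,1]$, and the second because $m \le n$. Multiplying these factors gives the term-by-term bound $\frac{\binom{m}{d}^2}{\binom{n}{d}} \le \frac{1}{d!}(m^2/n)^d$, and summing over all $d\ge 0$ produces $\sum_y |G_{xy}|^2 \le \sum_{d \ge 0}\frac{(m^2/n)^d}{d!} = e^{m^2/n}$.

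It then remains to translate this into the stated estimate. Since $k = n - a\sqrt{n}$ up to integer rounding of $m$, we have $m^2/n = a^2 + O(1/\sqrt{n})$, hence $e^{m^2/n} = e^{a^2} + O(1/\sqrt{n})$. An elementary calculus check shows $e^u \le 1 + 2u$ for $u \in [0,1]$: the difference $g(u) = 1+2u-e^u$ vanishes at $u=0$, is concave (as $g'(u)=2-e^u$ decreases), and satisfies $g(1)=3-e>0$, so it lies above its chord and is nonnegative on $[0,1]$. Applying this with $u = a^2 \in [0,1]$ gives $\sum_y |G_{xy}|^2 \le 1 + 2a^2 + O(1/\sqrt{n})$; feeding this into \eqref{eq:genlower} and using $1/(1+\epsilon) \ge 1-\epsilon$ yields $(\sqrt{G}_{xx})^2 \ge 1 - 2a^2 - O(1/\sqrt{n})$, as claimed.

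I expect the main obstacle to be the combinatorial simplification to $\sum_{d}\binom{m}{d}^2/\binom{n}{d}$ and, above all, the per-factor inequality that lets this sum be dominated by the exponential series; once that elementary bound is secured, the remaining estimates are routine. I note that this approach in fact delivers the slightly stronger conclusion $(\sqrt{G}_{xx})^2 \ge e^{-a^2} \ge 1 - a^2$, so there is a little slack in the constant as stated.
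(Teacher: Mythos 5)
Your proof is correct, and it shares the paper's overall skeleton: both start from \eqref{eq:genlower} and reduce the lemma to the upper bound $\sum_y |\ip{\psi^k_x}{\psi^k_y}|^2 \le 1 + 2a^2 + O(1/\sqrt{n})$. Where you differ is in how that sum is estimated. The paper rewrites it, via $\binom{n}{d}\binom{d}{k} = \binom{n}{k}\binom{n-k}{d-k}$, as $\frac{1}{\binom{n}{g}}\sum_{i=0}^{g}\binom{n-g+i}{i}\binom{g}{i}$ with $g = n-k$, bounds the ratio $\binom{n-g+i}{i}/\binom{n-g}{i}$ uniformly by $\bigl(1+\frac{g}{n-2g+1}\bigr)^{g+1}$, and then collapses what remains using Vandermonde's identity $\sum_i \binom{n-g}{i}\binom{g}{i} = \binom{n}{g}$, arriving at $e^{g(g+1)/(n-2g+1)}$. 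You instead recast the sum as $\sum_{d=0}^{m}\binom{m}{d}^2/\binom{n}{d}$ (your identity $\binom{n-d}{k}/\binom{n}{k} = \binom{m}{d}/\binom{n}{d}$ checks out) and dominate it term by term by the exponential series via the per-factor inequality $(m-i)^2/(n-i) \le m^2/n$, which your chain $(1-i/m)^2 \le 1-i/m \le 1-i/n$ establishes correctly. Your route is arguably cleaner: it avoids the $+1$'s and the denominator $n-2g+1$, gives the sharp bound $e^{m^2/n}$ with no slack in the exponent, and, as you observe, actually yields the stronger conclusion $(\sqrt{G}_{xx})^2 \ge e^{-a^2} - O(1/\sqrt{n}) \ge 1 - a^2 - O(1/\sqrt{n})$; the hypothesis $a \le 1$ is then needed only for the final linearization $e^u \le 1+2u$, not for the core estimate. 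The remaining calculus steps (concavity of $1+2u-e^u$, and $1/(1+\epsilon)\ge 1-\epsilon$) are all sound, so the proposal stands as a complete and slightly sharper proof of the lemma.
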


\begin{proof}
By (\ref{eq:genlower}) we have
\[
(\sqrt{G}_{xx})^2 \ge \frac{\binom{n}{k}^2}{\sum_{d=0}^n \binom{n}{d}\binom{d}{k}^2} = \frac{\binom{n}{k}}{\sum_{d=0}^n \binom{d}{k}\binom{n-k}{d-k}}.
\]
We now upper bound the reciprocal of this quantity, setting $g=n-k$, $i=n-d$ to obtain
\beas
\frac{1}{\binom{n}{g}} \sum_{i=0}^g \binom{n-g+i}{i} \binom{g}{i} &=& \frac{1}{\binom{n}{g}} \sum_{i=0}^g \binom{n-g}{i} \binom{g}{i} \frac{(n-g+i)\dots(n-g+1)}{(n-g)\dots(n-g-i+1)}\\
&=& \frac{1}{\binom{n}{g}} \sum_{i=0}^g \binom{n-g}{i} \binom{g}{i} \left(1 + \frac{i}{n-g}\right) \dots \left(1 + \frac{i}{n-g-i+1} \right)\\
&\le& \frac{1}{\binom{n}{g}} \sum_{i=0}^g \binom{n-g}{i} \binom{g}{i} \left(1 + \frac{g}{n-2g+1}\right)^{g+1}\\
&\le& e^{g(g+1)/(n-2g+1)}\\
&\le& 1 + 2a^2 + O(1/\sqrt{n}).
\eeas
\end{proof}

We also record here an exact expression for the probability of getting outcome $y$ on input $x$. Let $K_k^n(x)$ be the $k$'th Krawtchouk polynomial~\cite{krasikov99}, defined by
\[ K_k^n(x) = \sum_{i=0}^k (-1)^i \binom{x}{i}\binom{n-x}{k-i}. \]
\begin{lem}
\label{lem:kexact}
\be
(\sqrt{G}_{xy})^2 = 2^{-(n+k)} \binom{n}{d(x,y)}^{-2} \left( \sum_{z=0}^n K_{d(x,y)}^n(z) \binom{n}{z}^{1/2} \binom{k}{z}^{1/2} \right)^2.
\ee
\end{lem}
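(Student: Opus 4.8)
The plan is to compute the matrix entries $\sqrt{G}_{xy}$ directly by diagonalising $G$ in the Fourier basis, exactly as was done in Section \ref{sec:discrim}, but now tracking the full expression rather than only the special Fourier coefficients needed for $D_k$. Recall that $G_{xy}$ depends only on $x \oplus y$, so $G$ is diagonalised by the Fourier transform over $\Z_2^n$ with eigenvalues $\lambda(s) = 2^{n-k}\binom{n-|s|}{n-k}/\binom{n}{k}$ given in (\ref{eq:evs}). The key point is that since $G = \frac{1}{2^n}\sum_s \lambda(s)\, \chi_s \chi_s^\dagger$ (where $\chi_s$ is the Fourier character vector with entries $(-1)^{s\cdot x}$), its positive square root is obtained by simply replacing each eigenvalue $\lambda(s)$ with $\sqrt{\lambda(s)}$ while keeping the same eigenvectors. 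Concretely, $\sqrt{G}_{xy} = \frac{1}{2^n}\sum_{s} \sqrt{\lambda(s)}\,(-1)^{s\cdot(x\oplus y)}$, which again depends only on $d = d(x,y)$.

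First I would write out $\sqrt{\lambda(s)} = 2^{(n-k)/2}\binom{n-|s|}{n-k}^{1/2}\binom{n}{k}^{-1/2}$ and substitute into the inverse Fourier sum for $\sqrt{G}_{xy}$. Since the summand depends on $s$ only through its weight and through $s\cdot(x\oplus y)$, I would group the sum over $s$ by its Hamming weight $z = |s|$ and by its overlap with the fixed set $x\oplus y$ of size $d$. The combinatorial count of strings $s$ of weight $z$ with a given number of coordinates inside the support of $x\oplus y$, weighted by the sign $(-1)^{s\cdot(x\oplus y)}$, is exactly what produces the Krawtchouk polynomial: summing $(-1)^j\binom{d}{j}\binom{n-d}{z-j}$ over the number $j$ of overlapping coordinates gives $K_d^n(z)$ by its defining formula. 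This collapses the double sum into $\sum_{z=0}^n K_d^n(z)\binom{n-z}{n-k}^{1/2}$ times constant prefactors.

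The remaining step is bookkeeping: collecting the prefactors $2^{(n-k)/2}\binom{n}{k}^{-1/2}$ from $\sqrt{\lambda}$ and the $2^{-n}$ from the inverse transform, then squaring to get $(\sqrt{G}_{xy})^2$, and finally rewriting $\binom{n-z}{n-k}^{1/2} = \binom{k}{z}^{1/2}\binom{n}{z}^{1/2}\binom{n}{k}^{-1/2}$ (using $\binom{n-z}{n-k}\binom{n}{k} = \binom{k}{z}\binom{n}{z}$, an instance of the subset-of-a-subset identity) to match the stated form, where the $\binom{n}{d(x,y)}^{-2}$ absorbs the leftover binomial factors. The main obstacle I expect is purely the index-juggling in identifying the weighted sign-sum over $s$ with the Krawtchouk polynomial and then reconciling all the binomial prefactors so that the normalisation $2^{-(n+k)}\binom{n}{d}^{-2}$ comes out exactly; none of this is deep, but it is the place where a stray factor is easiest to introduce, so I would verify the final expression against the known sanity checks (the value at $d=0$ should reproduce $(\sqrt{G}_{xx})^2$, and the $k=n$ case should give $\sqrt{G} = I$).
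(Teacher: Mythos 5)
Your overall route is the same as the paper's: diagonalise $G$ in the Fourier basis, take square roots of the eigenvalues $\lambda(s)$, group the inverse-transform sum $\sqrt{G}_{xy} = \frac{1}{2^n}\sum_s \sqrt{\lambda(s)}(-1)^{s\cdot(x\oplus y)}$ by the weight $z=|s|$, identify the signed count with a Krawtchouk polynomial, and tidy up binomial coefficients. However, two of the concrete identities you invoke are false, and between them they lose exactly the content of the lemma, namely the $\binom{n}{d(x,y)}^{-2}$ prefactor. First, with the paper's definition $K_k^n(x) = \sum_{i}(-1)^i\binom{x}{i}\binom{n-x}{k-i}$, the signed count you compute is
\[ \sum_{j}(-1)^j\binom{d}{j}\binom{n-d}{z-j} = K_z^n(d), \qquad \text{not } K_d^n(z): \]
the degree index is the weight $z$ being summed over, and the argument is $d$ (this is also how the paper's own intermediate display reads, $\sum_z \binom{n-z}{n-k}^{1/2}K_z^n(d(x,y))$). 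Converting $K_z^n(d)$ into the $K_d^n(z)$ that appears in the lemma requires the Krawtchouk reciprocity relation $\binom{n}{d}K_z^n(d) = \binom{n}{z}K_d^n(z)$, which introduces the factor $\binom{n}{z}/\binom{n}{d}$ --- and this is precisely where the $d$-dependent prefactor $\binom{n}{d}^{-2}$ comes from. Your plan never invokes this relation, and as written it has no mechanism to produce any $d$-dependence outside the Krawtchouk sum.

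Second, your binomial identity is also stated the wrong way round: the subset-of-a-subset identity gives $\binom{n}{z}\binom{n-z}{n-k} = \binom{n}{k}\binom{k}{z}$, i.e.\ $\binom{n-z}{n-k}^{1/2} = \binom{n}{k}^{1/2}\binom{k}{z}^{1/2}\binom{n}{z}^{-1/2}$, not $\binom{k}{z}^{1/2}\binom{n}{z}^{1/2}\binom{n}{k}^{-1/2}$ (test $n=4$, $k=2$, $z=1$: $\binom{3}{2}\binom{4}{2}=18 \neq \binom{2}{1}\binom{4}{1}=8$). If you follow your two claimed identities literally, you arrive at
\[ (\sqrt{G}_{xy})^2 = 2^{-(n+k)}\binom{n}{k}^{-2}\left(\sum_{z=0}^n K_d^n(z)\binom{n}{z}^{1/2}\binom{k}{z}^{1/2}\right)^{2}, \]
with $\binom{n}{k}^{-2}$ in place of $\binom{n}{d}^{-2}$ --- a genuinely different and incorrect formula; your own sanity check at $d=0$ (where the lemma's prefactor is $1$) would have exposed this. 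The correct bookkeeping is: the sign sum equals $K_z^n(d)$; reciprocity replaces it by $\binom{n}{z}K_d^n(z)/\binom{n}{d}$; and the correct subset identity turns $\binom{n}{z}\binom{n-z}{n-k}^{1/2}$ into $\binom{n}{k}^{1/2}\binom{n}{z}^{1/2}\binom{k}{z}^{1/2}$, whereupon the $\binom{n}{k}^{\pm 1/2}$ factors cancel and squaring yields the lemma.
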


\begin{proof}
Essentially immediate from the discussion in Section \ref{sec:discrim}; the entries of $\sqrt{G}$ can be calculated using
\beas
\sqrt{G}_{xy} &=& \frac{1}{2^n} \sum_{s \in \{0,1\}^n} (-1)^{(x \oplus y) \cdot s} \lambda(s)^{1/2} = \frac{1}{2^{(n+k)/2}\binom{n}{k}^{1/2}} \sum_{s \in \{0,1\}^n} (-1)^{(x \oplus y) \cdot s} \binom{n-|s|}{n-k}^{1/2}\\
&=& \frac{1}{2^{(n+k)/2}\binom{n}{k}^{1/2}} \sum_{z=0}^n \binom{n-z}{n-k}^{1/2} \sum_{s \in \{0,1\}^n,|s|=z} (-1)^{(x \oplus y) \cdot s}\\
&=& \frac{1}{2^{(n+k)/2}\binom{n}{k}^{1/2}} \sum_{z=0}^n \binom{n-z}{n-k}^{1/2} K_z^n(d(x,y)),
\eeas
where $\lambda(s)$ are the eigenvalues of $G$ (see eqn.\ (\ref{eq:evs})). Lemma \ref{lem:kexact} then follows using well-known identities for binomial coefficients and Krawtchouk polynomials~\cite{krasikov99}.
\end{proof}

We finally turn to putting upper bounds on how well $x$ can be identified given a state of the form $\ket{\psi^k_x}$. We first observe that there is no loss of generality in putting upper bounds on the success probability of the PGM, as it is in fact the optimal measurement for identifying $x$ (in terms of minimising the average probability of error). This follows from a result of Eldar and Forney~\cite{eldar01} which shows that the PGM minimises the probability of error of state discrimination for states which are geometrically uniform, i.e.\ generated by applying an abelian group to an initial state $\ket{\phi}$. This holds for our states, as they can be thought of as being generated by applying the unitary $U_y$ defined by $U_y\ket{S}\ket{x} = \ket{S}\ket{x+y_S}$ to the initial state $\sum_{S \subseteq [n],|S|=k} \ket{S}\ket{0}$. The set $\{U_y\}$, $y \in \{0,1\}^n$, clearly forms an abelian group. As a more concise proof, optimality of the PGM follows directly from the diagonal entries of $\sqrt{G}$ being equal~\cite{belavkin75}.

\begin{lem}
Set $k = n - a\sqrt{n}$ for some $a \ge 0$. Then
\[ (\sqrt{G}_{xx})^2 \le 4e^{-a^2/32}. \]
\end{lem}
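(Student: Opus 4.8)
The plan is to reduce $(\sqrt{G}_{xx})^2$ to a single sum of square roots of binomial products, bound that sum by a \emph{weighted} Cauchy--Schwarz inequality, and finish with a binary-entropy estimate. First I would specialize the intermediate formula from the proof of Lemma~\ref{lem:kexact} to the diagonal $x=y$. Since $d(x,x)=0$ and $K_z^n(0)=\binom{n}{z}$, that expression gives
\[
\sqrt{G}_{xx} = \frac{1}{2^{(n+k)/2}\binom{n}{k}^{1/2}} \sum_{z=0}^{k} \binom{n-z}{n-k}^{1/2}\binom{n}{z}.
\]
Using the trinomial-revision identity $\binom{n}{z}\binom{n-z}{k-z}=\binom{n}{k}\binom{k}{z}$ to rewrite $\binom{n-z}{n-k}^{1/2}\binom{n}{z}=\binom{n}{k}^{1/2}\binom{n}{z}^{1/2}\binom{k}{z}^{1/2}$, the prefactor $\binom{n}{k}^{1/2}$ cancels, and I obtain the clean form
\[
(\sqrt{G}_{xx})^2 = 2^{-(n+k)}\left(\sum_{z=0}^{k}\binom{n}{z}^{1/2}\binom{k}{z}^{1/2}\right)^{2}.
\]

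Next I would bound the sum $S:=\sum_{z}\binom{n}{z}^{1/2}\binom{k}{z}^{1/2}$. The point is that a naive Cauchy--Schwarz only yields $S^2\le 2^k\cdot 2^n$ and hence the useless bound $(\sqrt{G}_{xx})^2\le 1$; the exponential saving appears only after inserting a free weight. For $\lambda>0$ I would write $\binom{n}{z}^{1/2}\binom{k}{z}^{1/2}=(\binom{k}{z}\lambda^{z})^{1/2}(\binom{n}{z}\lambda^{-z})^{1/2}$ and apply Cauchy--Schwarz, giving $S^2\le(1+\lambda)^k(1+1/\lambda)^n$ by the binomial theorem (extending the $\binom{n}{z}$-sum from $z\le k$ to $z\le n$ only adds non-negative terms). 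The minimum over $\lambda$ occurs at $\lambda=n/k$, which yields
\[
(\sqrt{G}_{xx})^2 \le 2^{-(n+k)}\,\frac{(n+k)^{\,n+k}}{k^{k}\,n^{n}}.
\]

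Finally I would convert this to an entropy bound and substitute $k=n-a\sqrt{n}$. Setting $p=k/(n+k)$, the right-hand side equals $\exp\big((n+k)(H(p)-\ln 2)\big)$, where $H(p)=-p\ln p-(1-p)\ln(1-p)$. Since $H''(p)=-1/(p(1-p))\le-4$, Taylor's theorem about $p=\tfrac12$ gives $H(p)\le \ln 2-2(p-\tfrac12)^2$, and with $p-\tfrac12=\frac{k-n}{2(n+k)}$ this leads to
\[
(\sqrt{G}_{xx})^2 \le \exp\!\left(-\frac{(n-k)^2}{2(n+k)}\right)=\exp\!\left(-\frac{a^2 n}{2(n+k)}\right)\le e^{-a^2/4},
\]
using $n+k\le 2n$ in the last step. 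As $e^{-a^2/4}\le 4\,e^{-a^2/32}$ for every $a\ge 0$, this establishes (indeed strengthens) the claimed bound.

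The main obstacle is the estimation of $S$: the statement is false for unweighted Cauchy--Schwarz, so the crux is recognizing that one must introduce the geometric weight $\lambda^{z}$ and optimize it, which is precisely what collapses the two binomial sums into $(1+\lambda)^k$ and $(1+1/\lambda)^n$ and produces the entropy expression. The remaining ingredients---the binomial identity, the concavity estimate for $H$, and the substitution $k=n-a\sqrt n$---are routine. I note that the stated constants $4$ and $1/32$ are presumably an artifact of a cruder tail estimate; the Cauchy--Schwarz route above gives the cleaner exponent $a^2/4$ with no prefactor.
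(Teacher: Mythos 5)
Your proof is correct, and it takes a genuinely different route from the paper's. Both arguments start from the same diagonal identity $(\sqrt{G}_{xx})^2 = 2^{-(n+k)}\bigl(\sum_z \binom{n}{z}^{1/2}\binom{k}{z}^{1/2}\bigr)^2$ (the paper states it directly; your rederivation via $K_z^n(0)=\binom{n}{z}$ and trinomial revision is fine). The paper then splits the sum at the threshold $k/2+a\sqrt{k}/4$, which sits between the means $k/2$ and $n/2$ of the two binomial distributions, applies \emph{unweighted} Cauchy--Schwarz to each piece, and invokes the Chernoff tail bound: on the lower piece the $\binom{n}{z}$-mass is at most $e^{-a^2/32}$, on the upper piece the $\binom{k}{z}$-mass is at most $e^{-a^2/32}$, so $\sqrt{G}_{xx}\le 2e^{-a^2/64}$, and squaring produces exactly the prefactor $4$ and exponent $a^2/32$ of the statement. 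You avoid the split entirely: inserting the geometric weight $\lambda^z$ before Cauchy--Schwarz is exponential tilting, i.e.\ you carry out the mechanism underlying the Chernoff bound inside the estimate rather than invoking it after a lossy decomposition. Optimising at $\lambda=n/k$ gives the closed form $2^{-(n+k)}(n+k)^{n+k}/(k^k n^n)$, and your entropy step is sound ($H''(p)=-1/(p(1-p))\le -4$ gives $H(p)\le \ln 2-2(p-\tfrac12)^2$ by Taylor's theorem), yielding $e^{-a^2/4}$, which indeed implies the stated bound since $4e^{7a^2/32}\ge 1$. What each approach buys: the paper's argument uses standard tail bounds as a black box, while yours is self-contained, gives a strictly stronger constant and exponent, and confirms your closing observation that the $4$ and the $1/32$ are artifacts of the split-plus-tail-bound argument rather than intrinsic to the problem. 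One tiny caveat: at $k=0$ (i.e.\ $a=\sqrt{n}$) the choice $\lambda=n/k$ is undefined, but that endpoint is trivial since there $(\sqrt{G}_{xx})^2=2^{-n}\le e^{-n/4}$.
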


\begin{proof}
We have
\[ \sqrt{G}_{xx} = 2^{-(n+k)/2} \sum_{z=0}^n \binom{n}{z}^{1/2} \binom{k}{z}^{1/2}. \]
Now split the sum into two parts to obtain
\beas \sqrt{G}_{xx} &=& 2^{-(n+k)/2} \sum_{z \le k/2 + a \sqrt{k}/4} \binom{k}{z}^{1/2} \binom{n}{z}^{1/2} + 2^{-(n+k)/2}\sum_{z > k/2 + a \sqrt{k}/4} \binom{k}{z}^{1/2}\binom{n}{z}^{1/2}\\
&\le& \left( \frac{1}{2^k}\!\!\!\!\sum_{z \le \frac{k}{2} + \frac{a \sqrt{k}}{4}} \binom{k}{z}\right)^{1/2}\!\!\left( \frac{1}{2^n}\!\!\!\!\sum_{z > \frac{k}{2} + \frac{a \sqrt{k}}{4}} \binom{n}{z}\right)^{1/2}\!\!\!\!+ \left( \frac{1}{2^k}\!\!\!\!\sum_{\frac{k}{2} + \frac{a \sqrt{k}}{4}} \binom{k}{z}\right)^{1/2}\!\!\left( \frac{1}{2^n}\!\!\!\!\sum_{z > \frac{k}{2} + \frac{a \sqrt{k}}{4}} \binom{n}{z}\right)^{1/2}
\eeas
by Cauchy-Schwarz. We now use the Chernoff bound that
\[ \frac{1}{2^n} \sum_{z \ge n/2 + b \sqrt{n}} \binom{n}{z} \le e^{-b^2/2} \]
for any $b>0$, which implies
\[ \frac{1}{2^n} \sum_{z \le k/2 + a \sqrt{k}/4} \binom{n}{z} \le e^{-a^2/32}, \;\; \frac{1}{2^k} \sum_{z > k/2 + a \sqrt{k}/4} \binom{k}{z} \le e^{-a^2/32}, \]
noting that $k/2 + a\sqrt{k}/4 \le n/2 - a\sqrt{n}/4$ by assumption. The claimed upper bound follows.
\end{proof}


\bibliographystyle{plain}
\bibliography{../thesis}

\end{document}